\newtheorem{theorem}{Theorem}[section]
\newtheorem{lemma}[theorem]{Lemma}
\newtheorem{example}{Example}
\newcommand{\tr}{{\mathrm{Tr}}}
\newcommand{\gf}{{\mathrm{GF}}}
\newcommand{\PG}{{\mathrm{PG}}}
\newcommand{\wt}{{\mathtt{wt}}}
\newcommand{\F}{{\mathbb{F}}}
\newcommand{\bC}{{\mathbb{C}}}
\newcommand{\C}{{\mathcal{C}}}
\newcommand{\T}{{\mathcal{T}}}
\newcommand{\V}{{\mathcal{V}}}
\newcommand{\cS}{{\mathcal{S}}}
\newcommand{\bs}{{\mathbf{s}}}
\newcommand{\bc}{{\mathbf{c}}}
\newcommand{\bx}{{\mathbf{x}}}
\newcommand{\cV}{{\mathcal{V}}} 
\newcommand{\cO}{{\mathcal{O}}} 
\newcommand{\cT}{{\mathcal{T}}}
\newenvironment{proof}[1][Proof]{\begin{trivlist}
\item[\hskip \labelsep {\bfseries #1}]}{\end{trivlist}}
\newcommand{\Rmnum}[1]{\expandafter\@slowromancap\romannumeral #1@}
\newcommand\myatop[2]{\genfrac{}{}{0pt}{}{#1}{#2}}
\begin{document}
%
\title{The Subfield Codes of Ovoid Codes
}

\author{ Cunsheng Ding\thanks{C. Ding is with the Department of Computer Science
                           and Engineering, The Hong Kong University of Science and Technology,                                                  Clear Water Bay,
                           Kowloon, Hong Kong, China (email: cding@ust.hk)},
 Ziling Heng\thanks{Z. Heng is with the Department of Computer Science
                           and Engineering, The Hong Kong University of Science and Technology,                                                  Clear Water Bay,
                           Kowloon, Hong Kong, China (email: zilingheng@163.com)}
}

\maketitle

\begin{abstract}
Ovoids in $\PG(3, \gf(q))$ have been an interesting topic in coding theory, combinatorics, 
and finite geometry for a long time. So far only two families of ovoids are known. The first 
is the elliptic quadratics and the second is the Tits ovoids. It is known that an ovoid  
in $\PG(3, \gf(q))$ corresponds to a $[q^2+1, 4, q^2-q]$ code over $\gf(q)$, which is called 
an ovoid code. The objectives of this paper is to study the subfield codes of the two families of 
ovoid codes. The dimensions, minimum weights, and the weight distributions of the subfield 
codes of the elliptic quadric codes and Tits ovoid codes are settled. The parameters of 
the duals of these subfield codes are also studied. Some of the codes presented in this 
paper are optimal, and some are distance-optimal. The parameters of the subfield codes 
are new.

\end{abstract}

\begin{IEEEkeywords}
Elliptic quadric, \and linear code, \and weight distribution, \and ovoid 
\end{IEEEkeywords}

%
\IEEEpeerreviewmaketitle

\section{Introduction}

 Let $q$ be a prime power. Let $n, k, d$ be positive integers.
 An $[n,\, k,\, d]$ \emph{code} $\C$ over $\gf(q)$ is a $k$-dimensional subspace of $\gf(q)^n$ with minimum
 (Hamming) distance $d$.
 Let $A_i$ denote the number of codewords with Hamming weight $i$ in a code
 $\C$ of length $n$. The {\em weight enumerator} of $\C$ is defined by
 $1+A_1z+A_2z^2+ \cdots + A_nz^n.$
 The sequence $(1, A_1, A_2, \cdots, A_n)$ is called the \emph{weight distribution} of the code $\C$.
 A code $\C$ is said to be a $t$-weight code  if the number of nonzero
 $A_i$ in the sequence $(A_1, A_2, \cdots, A_n)$ is equal to $t$. 
 An $[n,\, k,\, d]$ code over $\gf(q)$ is called \emph{distance-optimal} if there is no 
 $[n,\, k,\, d+1]$ code over $\gf(q)$, and \emph{dimension-optimal} if there is no 
 $[n,\, k+1,\, d]$ code over $\gf(q)$. A code is said to be optimal if it is both 
 distance-optimal and dimension-optimal.

A cap in the projective space $\PG(3, \gf(q))$ is a set of points in $\PG(3, \gf(q))$ such that 
no three of them are collinear. Let $q>2$. For any cap $\cV$ in $\PG(3, \gf(q))$, 
we have $|\cV| \leq q^2+1$ (see \cite{Bose47}, \cite{Seiden50} and \cite{Qvist52} 
for details).   
In the projective space $\PG(3, \gf(q))$ with $q>2$, an \emph{ovoid}\index{ovoid} $\cV$ 
is a set of $q^2+1$ points such that no three of them are collinear (i.e., on the same line). In other words, an ovoid is a $(q^2+1)$-cap (a cap with $q^2+1$ points) in 
$\PG(3, \gf(q))$, and thus a maximum cap. 

A \emph{classical ovoid} $\cV$ can be defined as the following set of points:  
\begin{eqnarray}\label{eqn-ellipticquadric}
\cV=\{(0,0,1, 0)\} \cup \{(x,\, y,\, x^2+xy +ay^2,\, 1): x,\, y \in \gf(q)\}, 
\end{eqnarray} 
where $a \in \gf(q)$ is such that the polynomial $x^2+x+a$ has no root in $\gf(q)$. 
Such ovoid is called an \emph{elliptic quadric}\index{quadric}, as the points 
come from a non-degenerate elliptic quadratic form.  

For $q=2^{2e+1}$ with $e \geq 1$, there is an ovoid which is not an elliptic quadric, 
and is called the \emph{Tits oviod}\index{Tits ovoid} \cite{Tits60}. It is defined by 
\begin{eqnarray}
\cT=\{(0,0,1,0)\}\cup \{(x,\,y,\, x^{\sigma} + xy +y^{\sigma+2},\,1): x, \, y \in \gf(q)\}, 
\end{eqnarray}   
where $\sigma=2^{e+1}$. 

For odd $q$, any ovoid is an elliptic quadric (see \cite{Barlotti} and \cite{Panella}).
For even $q$, Tits ovoids are the only known ones which are not elliptic quadratics. 
In the case that $q$ is even, 
the elliptic quadrics and the Tits ovoid are not equivalent \cite{Willems}. 
For further information about ovoids, the reader is referred to \cite{OKeefe}. 

Let $\cS$ be a subset of $\PG(3, \gf(q))$ with $q^2+1$ elements, where $q>2$. Denote by 
$$ 
\cS=\{\bs_1, \bs_2, \cdots, \bs_{q^{2}+1}\} 
$$ 
where each $\bs_i$ is a column vector in $\gf(q)^4$. Let $\C_{\cS}$ be the linear code over $\gf(q)$ 
with generator matrix 
\begin{eqnarray}
G_{\cS}=\left[\bs_1 \bs_2 \cdots \bs_{q^{2}+1}\right].  
\end{eqnarray}  

The following result is known (see \cite[p. 192]{Bier} or \cite{Ding18arxiv}). 

\begin{theorem}\label{th-dualdistanceofC}
The set $\cS$ is an ovoid if and only if $\C_{\cS}$ has parameters $[q^2+1, 4, q^2-q]$. 
\end{theorem}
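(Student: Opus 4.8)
The plan is to convert the stated parameters into geometric information about how the planes of $\PG(3,\gf(q))$ meet $\cS$, and then to prove the two implications by incidence counting together with the classical theory of ovoids. First I record the dictionary between codewords and planes. A codeword of $\C_{\cS}$ has the form $\ba G_{\cS}=(\langle\ba,\bs_1\rangle,\ldots,\langle\ba,\bs_{q^2+1}\rangle)$ with $\ba\in\gf(q)^4$, and for $\ba\neq\bzero$ the hyperplane $H_\ba=\{\bx:\langle\ba,\bx\rangle=0\}$ is a plane of $\PG(3,\gf(q))$. Therefore
\[ \wt(\ba G_{\cS})=(q^2+1)-|\cS\cap H_\ba|, \]
and since every plane arises as some $H_\ba$, once $\dim\C_{\cS}=4$ is known the minimum distance equals $(q^2+1)-M$, where $M=\max_{H}|\cS\cap H|$ is taken over all planes $H$. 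Thus the theorem reduces to the geometric assertion that $\cS$ is an ovoid if and only if $\cS$ spans $\PG(3,\gf(q))$ and $M=q+1$.

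Assume first that $\cS$ is an ovoid. It cannot lie in a plane, for otherwise it would be a $(q^2+1)$-arc of $\PG(2,\gf(q))$, contradicting the planar arc bound of $q+2$ points since $q^2+1>q+2$ for $q>2$; hence $\rank G_{\cS}=4$ and $\dim\C_{\cS}=4$. To locate $M$ I would count incidences: each point of $\cS$ lies on $q^2+q+1$ planes, while $\PG(3,\gf(q))$ has $(q+1)(q^2+1)$ planes in all, so the average of $|\cS\cap H|$ over all planes is $(q^2+q+1)/(q+1)=q+1/(q+1)>q$, forcing some plane to meet $\cS$ in at least $q+1$ points and giving $M\ge q+1$. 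The reverse inequality $M\le q+1$ is the classical statement that every plane section of an ovoid is either a single point or an oval of exactly $q+1$ points (see, e.g., \cite{OKeefe}); its essential content is that the $q+1$ tangent lines at a point of the ovoid are coplanar. Combining the two bounds yields $M=q+1$ and hence $d=q^2-q$.

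Conversely, assume $\C_{\cS}$ has parameters $[q^2+1,4,q^2-q]$. Since the dimension is $4$, the distance formula forces $M=q+1$, so no plane meets $\cS$ in more than $q+1$ points. I claim no three points of $\cS$ are collinear. Suppose a line $\ell$ met $\cS$ in $t\ge 3$ points. There are exactly $q+1$ planes $\pi_1,\ldots,\pi_{q+1}$ through $\ell$, each containing all $t$ points of $\cS\cap\ell$, while each of the remaining $q^2+1-t$ points of $\cS$ lies on precisely one of them. Writing $m_i=|\cS\cap\pi_i|$ we obtain $\sum_{i} m_i=t(q+1)+(q^2+1-t)=tq+q^2+1$, whereas $m_i\le q+1$ gives $\sum_i m_i\le(q+1)^2$. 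Hence $tq+q^2+1\le q^2+2q+1$, that is $t\le 2$, a contradiction. Therefore $\cS$ is a $(q^2+1)$-cap, which by the definition in the excerpt is an ovoid.

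The crux is the upper bound $M\le q+1$ in the forward direction. The planar arc bound alone only gives $|\cS\cap H|\le q+2$, and for even $q$ a $(q+2)$-section (a hyperoval) is not excluded by elementary line counting, since the tangent and secant tallies through a single point stay self-consistent in that case. Ruling out such sections genuinely relies on the structure of ovoids---equivalently, on the coplanarity of the tangent lines at a point---so I would support this one step by citing the classical ovoid theory rather than reproving it. By contrast the converse direction is entirely elementary, its whole force residing in the single incidence count displayed above.
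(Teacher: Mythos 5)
Your proposal is correct, but there is nothing in the paper to compare it against: the paper does not prove Theorem \ref{th-dualdistanceofC} at all, stating it as known and citing \cite[p. 192]{Bier} and \cite{Ding18arxiv}. Your argument therefore supplies what the paper omits, and it does so soundly. The dictionary $\wt(\ba G_{\cS})=(q^2+1)-|\cS\cap H_\ba|$ (valid once $\rank G_{\cS}=4$ makes $\ba\mapsto\ba G_{\cS}$ injective) correctly reduces the theorem to ``$\cS$ spans $\PG(3,\gf(q))$ and $\max_H|\cS\cap H|=q+1$.'' Your converse is fully self-contained and is the most attractive part: from $d=q^2-q$ every plane meets $\cS$ in at most $q+1$ points, and the double count over the $q+1$ planes through a hypothetical $t$-secant line, $tq+q^2+1=\sum_i m_i\le (q+1)^2$, forces $t\le 2$; this is airtight. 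In the forward direction, the spanning claim (via the planar arc bound) and the averaging argument giving $M\ge q+1$ are both correct and elementary. The one step you outsource---that plane sections of an ovoid have at most $q+1$ points---is genuinely the crux, and your diagnosis is accurate: for odd $q$ it follows from the elementary fact that $(q+2)$-arcs in $\PG(2,\gf(q))$ force a parity contradiction, while for even $q$ excluding hyperoval sections really does require the classical tangent-plane theorem, which resists the moment/incidence counts available at this level (the first three moment identities for the plane-intersection distribution are consistent with nonzero numbers of empty and $(q+2)$-planes). Citing \cite{OKeefe} for exactly that ingredient, while proving everything else from scratch, is a reasonable division of labor---indeed it makes your write-up strictly more self-contained than the paper, whose ``proof'' is the citation itself.
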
 

Due to Theorem \ref{th-dualdistanceofC}, any $[q^2+1, 4, q^2-q]$ code over $\gf(q)$ is 
called an \emph{ovoid code.} Ovoid codes are optimal, as they meet the Griesmer bound. 
It is also known that any $[q^2+1, 4, q^2-q]$ code over $\gf(q)$ must have the following 
weight enumerator \cite[p. 192]{Bier}: 
\begin{eqnarray}\label{eqn-wtdistovidcode}
1+(q^2-q)(q^2+1)z^{q^2-q}+(q-1)(q^2+1)z^{q^2}. 
\end{eqnarray} 
It then follows that a linear code over $\gf(q)$ has parameters $[q^2+1, 4, q^2-q]$ if 
and only if its dual has parameters $[q^2+1, q^2-3, 4]$. Ovoid codes and their duals are 
interesting due to the following: 
\begin{itemize}
\item Ovoid codes meet the Griesmer bound and are thus optimal. 
\item The duals of ovoid codes are almost-MDS. 
\item Ovoid codes and their duals can be employed to construct $3$-designs and inversive 
planes \cite{Ding18arxiv}. 
\item Ovoid codes are also the maximum minimum distance (MMD) codes \cite{FW98}. 
\end{itemize} 

Let $q=p^m$, where $p$ is a prime. Any linear code of length $n$ over $\gf(q)$ gives  
a subfield code of length $n$ over $\gf(p)$ (see Section \ref{sec-subfieldcodes}). 
The objective of this paper is to determine the parameters of the subfield codes of 
the elliptic quadric codes and Tits ovoid codes and their duals. In particular, the 
weight distributions of the subfield codes of the elliptic quadric codes and Tits ovoid 
codes are determined. As will be seen later, some of these codes are optimal. In particular, the 
duals of the subfield codes of these ovoid codes are distance-optimal. The parameters 
of the subfield codes presented in this paper are new. The optimality 
and distance optimality of these codes are the motivation of this paper.

\section{Subfield codes and their properties}\label{sec-subfieldcodes}

\subsection{Definition and basic properties}

Let $\gf(q^m)$ be a finite field with $q^m$ elements, where $q$ is a power of a prime and $m$ is a positive integer. In this section, we introduce subfield codes of linear codes and prove some basic results of subfield codes.

Given an $[n,k]$ code $\C$ over $\gf(q^m)$, we construct a new $[n, k']$ code $\C^{(q)}$ 
over $\gf(q)$ as follows. Let $G$ be a generator matrix of $\C$. Take a basis of $\gf(q^m)$ 
over $\gf(q)$. Represent each entry of $G$ as an $m \times 1$ column vector of $\gf(q)^m$ 
with respect to this basis, and replace each entry of $G$ with the corresponding $m \times 1$ column vector of $\gf(q)^m$. In this way, $G$ is modified into a $km \times n$ matrix over 
$\gf(q)$, which generates the new subfield code $\C^{(q)}$ over $\gf(q)$ with length $n$. 
By definition, the dimension $k'$ of $\C^{(q)}$ satisfies $k'\leq mk$. We will prove that 
the subfield code $\C^{(q)}$ of $\C$ is independent of the choices of both $G$ and the 
basis of $\gf(q^m)$ over $\gf(q)$. We first prove the following. 

\begin{theorem}\label{th-independent} 
For any linear code $\C$ over $\gf(q^m)$, the subfield code $\C^{(q)}$ is independent of 
the choice of the basis of $\gf(q^m)$ over $\gf(q)$ for any fixed generator matrix $G$. 
\end{theorem}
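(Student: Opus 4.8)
The plan is to show that changing the basis of $\gf(q^m)$ over $\gf(q)$ produces a subfield code that is not merely equivalent but in fact \emph{identical} to $\C^{(q)}$. Fix a generator matrix $G=(g_{ij})$ of $\C$ over $\gf(q^m)$, and let $\{\alpha_1, \ldots, \alpha_m\}$ and $\{\beta_1, \ldots, \beta_m\}$ be two bases of $\gf(q^m)$ over $\gf(q)$. For each entry $g_{ij}$, write its coordinate column vectors with respect to the two bases; expanding $g_{ij}$ in each basis gives two $m\times 1$ vectors over $\gf(q)$, and stacking these over the $k$ rows yields two $km\times n$ generator matrices, say $G^{(\alpha)}$ and $G^{(\beta)}$, for the two candidate subfield codes. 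The goal is to prove these two matrices generate the same row space over $\gf(q)$.

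The key observation is that the two bases are related by an invertible change-of-basis matrix $P\in\GL(m,\gf(q))$, namely $\beta_\ell=\sum_{t=1}^{m} P_{t\ell}\,\alpha_t$ for some $P_{t\ell}\in\gf(q)$ with $\det P\neq 0$. First I would express, for a single field element $g\in\gf(q^m)$, the relationship between its $\alpha$-coordinate vector $\bv_\alpha(g)$ and its $\beta$-coordinate vector $\bv_\beta(g)$; by linearity of coordinates and invertibility of $P$, these are related by $\bv_\alpha(g)=P\,\bv_\beta(g)$ (or $\bv_\beta(g)=P^{-1}\bv_\alpha(g)$, depending on conventions). Then I would assemble this entry-by-entry relationship into a block statement: if we group the $km$ rows of each candidate generator matrix into $k$ blocks of $m$ rows (one block per row of $G$), the transformation from $G^{(\alpha)}$ to $G^{(\beta)}$ is realized by left-multiplication by the block-diagonal matrix $\mathrm{diag}(P,\ldots,P)\in\GL(km,\gf(q))$, acting identically on each block.

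Since $\mathrm{diag}(P,\ldots,P)$ is invertible over $\gf(q)$, left-multiplying a generator matrix by it gives another generator matrix of the \emph{same} code. Concretely, $G^{(\beta)} = \mathrm{diag}(P,\ldots,P)^{-1}\,G^{(\alpha)}$ (up to the convention chosen), and invertible row operations preserve the row space, so the row space of $G^{(\beta)}$ over $\gf(q)$ equals the row space of $G^{(\alpha)}$. Hence the two subfield codes coincide as subsets of $\gf(q)^n$, which is the desired independence from the choice of basis.

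The step I expect to require the most care is setting up the coordinate bookkeeping cleanly: one must fix a consistent convention for whether coordinates are related by $P$ or $P^{-1}$, and verify that the per-entry change of basis assembles into a \emph{single} block-diagonal matrix acting uniformly across all $n$ columns and all $k$ original rows. The underlying content is elementary linear algebra, but the indexing (field element versus coordinate vector, row block versus column) is where errors creep in, so I would state the single-element coordinate transformation as a small lemma or displayed identity and then invoke it uniformly rather than re-deriving it for each entry.
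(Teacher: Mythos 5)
Your proof is correct and follows essentially the same route as the paper's own proof: both encode the change of basis as an invertible matrix in $\GL(m,\gf(q))$ and observe that the two expanded $km\times n$ generator matrices are related by left multiplication by the invertible block-diagonal matrix $\mathrm{diag}(T,\ldots,T)$, hence generate the same row space over $\gf(q)$. Your attention to the $T$ versus $T^{-1}$ convention is warranted (the paper is slightly loose on exactly this point), but as you note, invertibility is all that the argument requires.
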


\begin{proof}
Let $\C$ be an $[n,k]$ linear code over $\gf(q^m)$. Let
$$G=\begin{bmatrix} G_1\\ G_2 \\ \vdots \\ G_k \end{bmatrix} $$
be a generator matrix of $\C$, where each $G_i$ is a $1\times n$ vector in $\gf(q^m)^n$. Choose  a basis of $\gf(q^m)$ over $\gf(q)$ and expand each element in $G_i, 1\leq i \leq k$,  under this basis as a column vector over $\gf(q)$. Then each $G_i$ is expanded as an $m\times n$ matrix $G_i^{(q)}$ over $\gf(q)$. Put
$$G^{(q)}=\begin{bmatrix} G_1^{(q)}\\ G_2^{(q)}\\ \vdots \\ G_k^{(q)} \end{bmatrix}. $$
Then $G^{(q)}$ is a generator matrix of the subfield code $\C^{(q)}$ of $\C$. Let $\mathbf{\alpha}=\{\alpha_1,\alpha_2,\cdots,\alpha_{m}\}$ and $\mathbf{\beta}=\{\beta_1,\beta_2,\cdots,\beta_{m}\}$ be any two bases of $\gf(q^m)$ over $\gf(q)$. Suppose that
$$(\alpha_1,\alpha_2,\cdots,\alpha_{m})=(\beta_1,\beta_2,\cdots,\beta_{m})T$$ where $T$ is an $m\times m$ invertible matrix over $\gf(q)$. Denote the corresponding subfield codes of $\C$ under the two bases $\mathbf{\alpha}$ and $\mathbf{\beta}$ as $\C_{\mathbf{\alpha}}^{(q)}$ and $\C_{\mathbf{\beta}}^{(q)}$, respectively. The the generator matrix $G_{\mathbf{\alpha}}^{(q)}$ of $\C_{\mathbf{\alpha}}^{(q)}$ and the generator matrix $G_{\mathbf{\beta}}^{(q)}$ of $\C_{\mathbf{\beta}}^{(q)}$ satisfy
$$G_{\mathbf{\alpha}}^{(q)}=\begin{bmatrix} T & & &\\ & T & &\\ & & \ddots & \\ & & & T \end{bmatrix}G_{\mathbf{\beta}}^{(q)}.$$
Hence, $\C_{\mathbf{\alpha}}^{(q)}$ and $\C_{\mathbf{\beta}}^{(q)}$ are the same subspace 
as $T$ is invertible. Then the desired conclusion follows.
\end{proof} 

We will prove that the subfield code $\C^{(q)}$ is also independent of the choice of 
the generator matrix $G$. To proceed in this direction, we give a trace representation of 
the subfield code. The following lemma is well-known \cite{LN} and needed later.

\begin{lemma}\label{lem-dualbasis}
Let $\{\alpha_1,\alpha_2,\cdots,\alpha_{m}\}$ be a basis of $\gf(q^m)$ over $\gf(q)$. Then there exists a unique basis $\{\beta_1,\beta_2,\cdots,\beta_{m}\}$ such that for $1\leq i,j \leq m$,
$$\tr_{q^m/q}(\alpha_i\beta_j)=\left\{
\begin{array}{rcl}
0    &   \mbox{ for }i\neq j,\\
1    &   \mbox{ for }i= j,\\
\end{array} \right. $$
i.e. the dual basis.
\end{lemma}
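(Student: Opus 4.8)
The plan is to recognize the statement as the standard existence-and-uniqueness theorem for a dual basis with respect to the trace form, and to reduce everything to the nondegeneracy of that form. Concretely, I would introduce the symmetric $\gf(q)$-bilinear form $B \colon \gf(q^m) \times \gf(q^m) \to \gf(q)$ defined by $B(x,y) = \tr_{q^m/q}(xy)$, regarding $\gf(q^m)$ as an $m$-dimensional vector space over $\gf(q)$. Uniqueness and existence of the claimed basis $\{\beta_1,\dots,\beta_m\}$ will then follow immediately from the fact that $B$ is nondegenerate, so the core of the argument is to establish that nondegeneracy.

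First I would record that the trace map $\tr_{q^m/q}$ is not identically zero: since $\tr_{q^m/q}(x) = x + x^q + \cdots + x^{q^{m-1}}$ is a polynomial of degree $q^{m-1} < q^m = |\gf(q^m)|$, it cannot vanish on all of $\gf(q^m)$. From this I would deduce nondegeneracy of $B$ as follows. Suppose $\gamma \in \gf(q^m)$ satisfies $B(\gamma, z) = \tr_{q^m/q}(\gamma z) = 0$ for every $z$. If $\gamma \neq 0$, then $z \mapsto \gamma z$ is a bijection of $\gf(q^m)$, so $\tr_{q^m/q}$ would be identically zero, a contradiction; hence $\gamma = 0$, which is exactly the nondegeneracy of $B$.

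Next I would translate nondegeneracy into the existence of the dual basis. Writing each candidate $\beta_j = \sum_{k=1}^m c_{jk}\alpha_k$ with unknowns $c_{jk} \in \gf(q)$, the conditions $\tr_{q^m/q}(\alpha_i\beta_j) = \delta_{ij}$ become the linear system $\sum_k c_{jk}\, M_{ik} = \delta_{ij}$, where $M = (\tr_{q^m/q}(\alpha_i\alpha_k))_{i,k}$ is the Gram matrix of $B$ in the basis $\{\alpha_i\}$. The matrix $M$ is invertible precisely because $B$ is nondegenerate: a nontrivial kernel vector $v$ of $M$ would produce $\gamma = \sum_i v_i\alpha_i \neq 0$ with $\tr_{q^m/q}(\gamma\,\alpha_j)=0$ for all $j$, hence $B(\gamma,\cdot)\equiv 0$ by $\gf(q)$-linearity (every $z$ is a combination of the $\alpha_j$), contradicting the previous step. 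Invertibility of $M$ yields a unique coefficient matrix, hence unique $\beta_j$; and since $M$ is invertible the $\beta_j$ are themselves a basis of $\gf(q^m)$ over $\gf(q)$. The only genuinely delicate point is the nontriviality of the trace underlying the nondegeneracy of $B$; everything else is routine linear algebra over $\gf(q)$.
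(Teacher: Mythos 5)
Your proof is correct and complete. Note, however, that the paper does not prove this lemma at all: it simply records it as well-known and cites Lidl and Niederreiter \cite{LN}, so there is no in-paper argument to compare against. Your self-contained proof is the standard one: nontriviality of the trace (via the degree bound $q^{m-1}<q^m$ on the polynomial $x+x^q+\cdots+x^{q^{m-1}}$) gives nondegeneracy of the symmetric $\gf(q)$-bilinear form $B(x,y)=\tr_{q^m/q}(xy)$, which makes the Gram matrix $M=(\tr_{q^m/q}(\alpha_i\alpha_k))$ invertible, and then existence, uniqueness, and the fact that the $\beta_j$ again form a basis all drop out of solving the linear system $\tr_{q^m/q}(\alpha_i\beta_j)=\delta_{ij}$. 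Each of these steps is sound, and supplying the argument explicitly is if anything more informative than the paper's citation.
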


Lemma \ref{lem-dualbasis} directly yields the following.

\begin{lemma}\label{lem-trcerepresent}
Let $\{\alpha_1,\alpha_2,\cdots,\alpha_{m}\}$ be a basis and $\{\beta_1,\beta_2,\cdots,\beta_{m}\}$ be its dual basis of $\gf(q^m)$ over $\gf(q)$. For any $a=\sum_{i=1}^{m}a_i\alpha_i\in \gf(q^m)$ where each $a_i\in \Bbb F_q$, we then have
$$a_i=\tr_{q^m/q}(a\beta_i).$$
\end{lemma}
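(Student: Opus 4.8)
The plan is to use the dual basis property from Lemma~\ref{lem-dualbasis} to extract each coordinate $a_i$ as a trace. First I would write $a = \sum_{j=1}^{m} a_j \alpha_j$ with each $a_j \in \gf(q)$, and multiply by a fixed $\beta_i$, obtaining $a\beta_i = \sum_{j=1}^{m} a_j (\alpha_j \beta_i)$. Then I would apply the trace map $\tr_{q^m/q}$ to both sides. Since $\tr_{q^m/q}$ is $\gf(q)$-linear and each $a_j$ lies in $\gf(q)$, the coefficients $a_j$ pull outside the trace, yielding
$$\tr_{q^m/q}(a\beta_i) = \sum_{j=1}^{m} a_j \, \tr_{q^m/q}(\alpha_j \beta_i).$$

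The key step is then to invoke the defining property of the dual basis from Lemma~\ref{lem-dualbasis}, namely $\tr_{q^m/q}(\alpha_j \beta_i) = \delta_{ij}$ (the Kronecker delta). Substituting this collapses the sum on the right-hand side to the single surviving term $a_i$, giving $\tr_{q^m/q}(a\beta_i) = a_i$, which is exactly the claimed identity.

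The argument is essentially a direct computation once the $\gf(q)$-linearity of the trace and the orthogonality relation of the dual basis are in hand, so there is no serious obstacle here; this is why the excerpt phrases the lemma as following \emph{directly} from Lemma~\ref{lem-dualbasis}. The only points requiring a moment of care are verifying that the scalars $a_j$ genuinely belong to $\gf(q)$ so that they commute with and factor through the $\gf(q)$-linear trace map, and keeping the roles of the indices $i$ and $j$ straight when applying the orthogonality relation. Neither of these is delicate, and the whole proof fits in a few lines.
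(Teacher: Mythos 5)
Your proof is correct and follows exactly the route the paper intends: the paper gives no explicit argument, stating only that Lemma~\ref{lem-dualbasis} ``directly yields'' the result, and your computation (expanding $a$ in the basis, applying the $\gf(q)$-linear trace, and collapsing the sum via $\tr_{q^m/q}(\alpha_j\beta_i)=\delta_{ij}$) is precisely the standard verification being invoked. Nothing is missing.
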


\begin{theorem}\label{th-genematix}
Let $\C$ be an $[n,k]$ linear code over $\gf(q^m)$ with generator matrix
$$G=\begin{bmatrix} g_{11} &g_{12} & \cdots & g_{1n}\\  g_{21} &g_{22} & \cdots & g_{2n}\\ \vdots & \vdots & \ddots & \vdots\\  g_{k1} &g_{k2} & \cdots & g_{kn}\\\end{bmatrix}.$$
Let $\{\alpha_1,\alpha_2,\cdots,\alpha_{m}\}$ be a basis of $\gf(q^m)$ over $\gf(q)$. Then the subfield code $\C^{(q)}$ of $\C$ has a generator matrix
$$G^{(q)}=\begin{bmatrix} G_1^{(q)}\\ G_2^{(q)}\\ \vdots \\ G_k^{(q)} \end{bmatrix}$$ 
where each 
$$ 
G_{i}^{(q)}=\begin{bmatrix} \tr_{q^m/q}(g_{i1}\alpha_1) &\tr_{q^m/q}(g_{i2}\alpha_1) & \cdots & \tr_{q^m/q}(g_{in}\alpha_1)\\  \tr_{q^m/q}(g_{i1}\alpha_2) &\tr_{q^m/q}(g_{i2}\alpha_2) & \cdots & \tr_{q^m/q}(g_{in}\alpha_2)\\ \vdots & \vdots & \ddots & \vdots \\ \tr_{q^m/q}(g_{i1}\alpha_m) & \tr_{q^m/q}(g_{i2}\alpha_m) &\cdots & \tr_{q^m/q}(g_{in}\alpha_m) \end{bmatrix}.$$
\end{theorem}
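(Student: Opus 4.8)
The plan is to start from the original definition of the subfield code $\C^{(q)}$ given in the paragraph preceding Theorem~\ref{th-independent}, where each entry $g_{ij}\in\gf(q^m)$ is expanded into a column vector of $\gf(q)^m$ via a chosen basis, and then reconcile this coordinate-vector description with the trace representation claimed in the statement. The key observation is that the definition uses one basis to expand the entries, whereas Lemma~\ref{lem-trcerepresent} extracts coordinates using the \emph{dual} basis. So the first thing I would do is fix notation: let $\{\alpha_1,\ldots,\alpha_m\}$ be the basis appearing in the theorem statement, and let $\{\beta_1,\ldots,\beta_m\}$ be its dual basis as guaranteed by Lemma~\ref{lem-dualbasis}.

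Next I would carry out the core computation. By Theorem~\ref{th-independent}, the subfield code does not depend on which basis is used for the expansion, so I am free to expand each entry $g_{ij}$ in the dual basis $\{\beta_1,\ldots,\beta_m\}$ rather than in $\{\alpha_1,\ldots,\alpha_m\}$. Writing $g_{ij}=\sum_{\ell=1}^{m} c_{ij\ell}\,\beta_\ell$ with $c_{ij\ell}\in\gf(q)$, Lemma~\ref{lem-trcerepresent} (applied with the roles of basis and dual basis as in that lemma) gives the clean formula $c_{ij\ell}=\tr_{q^m/q}(g_{ij}\alpha_\ell)$. Thus the column vector representing $g_{ij}$ has $\ell$-th coordinate equal to $\tr_{q^m/q}(g_{ij}\alpha_\ell)$. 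Stacking these coordinates for the $i$-th row of $G$ across all columns $j=1,\ldots,n$ produces exactly the $m\times n$ block $G_i^{(q)}$ written in the statement, where row $\ell$ of the block holds the entries $\tr_{q^m/q}(g_{ij}\alpha_\ell)$. Assembling the $k$ blocks vertically yields the asserted generator matrix $G^{(q)}$.

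The one genuinely delicate point — and the step I expect to be the main obstacle — is getting the duality bookkeeping exactly right: the expansion basis in the definition and the basis whose traces appear in the statement are dual to each other, so I must invoke Lemma~\ref{lem-trcerepresent} with the correct assignment of which basis plays the ``$\alpha$'' role and which plays the ``$\beta$'' role. A careless application would put $\tr_{q^m/q}(g_{ij}\beta_\ell)$ in the matrix instead of $\tr_{q^m/q}(g_{ij}\alpha_\ell)$, which is the wrong statement. To make this airtight I would explicitly state that I expand in the dual basis $\{\beta_\ell\}$ and then invoke Lemma~\ref{lem-trcerepresent} to read off the coordinate $a_\ell=\tr_{q^m/q}(g_{ij}\alpha_\ell)$, citing Theorem~\ref{th-independent} to justify that this choice of expansion basis gives the same subfield code. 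Everything else is routine reindexing.

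Finally, I would remark that since the trace representation in the statement is manifestly independent of the generator matrix up to a change that corresponds to row operations over $\gf(q^m)$ lifting to row operations over $\gf(q)$, this theorem also furnishes the promised proof that $\C^{(q)}$ is independent of the choice of $G$, completing the two-part independence claim foreshadowed in the text after Theorem~\ref{th-independent}.
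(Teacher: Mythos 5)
Your proof is correct and follows essentially the same route as the paper, whose entire proof is the one-line remark that the result ``follows from Lemma~\ref{lem-trcerepresent}.'' You have simply made explicit what that citation hides: the trace coefficients $\tr_{q^m/q}(g_{ij}\alpha_\ell)$ are the coordinates of $g_{ij}$ in the dual basis $\{\beta_\ell\}$, and Theorem~\ref{th-independent} licenses expanding in that basis, which is exactly the bookkeeping the paper leaves to the reader.
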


\begin{proof}
The desired conclusion follows from Lemma \ref{lem-trcerepresent}.
\end{proof}

With the help of Theorem \ref{th-genematix}, the trace representation of subfield codes is given in the next theorem.

\begin{theorem}\label{th-tracerepresentation}
Let $\C$ be an $[n,k]$ linear code over $\gf(q^m)$. Let $G=[g_{ij}]_{1\leq i \leq k, 1\leq j \leq n}$ be a generator matrix of $\C$. Then the trace representation of $\C^{(q)}$ is given by
$$
\C^{(q)}=\left\{\left(\tr_{q^m/q}\left(\sum_{i=1}^{k}a_ig_{i1}\right),
\cdots,\tr_{q^m/q}\left(\sum_{i=1}^{k}a_ig_{in}\right)\right):a_1,\ldots,a_k\in \gf(q^m)\right\}.
$$
\end{theorem}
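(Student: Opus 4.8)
The plan is to read off a generic codeword of $\C^{(q)}$ from the generator matrix $G^{(q)}$ supplied by Theorem \ref{th-genematix}, and then to repackage the scalar coefficients of a row combination into field elements of $\gf(q^m)$. By Theorem \ref{th-genematix}, the rows of $G^{(q)}$ are indexed by pairs $(i,\ell)$ with $1\le i\le k$ and $1\le \ell\le m$, the $(i,\ell)$-th row having $j$-th entry $\tr_{q^m/q}(g_{ij}\alpha_\ell)$. By definition a codeword of $\C^{(q)}$ is an arbitrary $\gf(q)$-linear combination of these rows, so with coefficients $c_{i\ell}\in\gf(q)$ its $j$-th coordinate equals $\sum_{i=1}^{k}\sum_{\ell=1}^{m}c_{i\ell}\,\tr_{q^m/q}(g_{ij}\alpha_\ell)$.

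The first step is to invoke the $\gf(q)$-linearity of the trace map $\tr_{q^m/q}$. Since every $c_{i\ell}$ lies in $\gf(q)$, the scalars may be pulled inside the trace, so the $j$-th coordinate becomes $\tr_{q^m/q}\bigl(\sum_{i=1}^{k} g_{ij}\sum_{\ell=1}^{m} c_{i\ell}\alpha_\ell\bigr)$. Setting $a_i=\sum_{\ell=1}^{m} c_{i\ell}\alpha_\ell$, this is exactly $\tr_{q^m/q}\bigl(\sum_{i=1}^{k} a_i g_{ij}\bigr)$, which is the form claimed in the statement.

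The only point needing a word of justification — and the nearest thing to an obstacle — is that, for each fixed $i$, the assignment $(c_{i1},\dots,c_{im})\mapsto a_i$ is a bijection from $\gf(q)^m$ onto $\gf(q^m)$. This is immediate from the fact that $\{\alpha_1,\dots,\alpha_m\}$ is a basis of $\gf(q^m)$ over $\gf(q)$: every $a_i\in\gf(q^m)$ admits a unique expansion $a_i=\sum_{\ell=1}^m c_{i\ell}\alpha_\ell$ with $c_{i\ell}\in\gf(q)$. Consequently, as the tuple $(c_{i\ell})$ ranges over $\gf(q)^{km}$, the tuple $(a_1,\dots,a_k)$ ranges over all of $\gf(q^m)^k$, and conversely. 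Both inclusions of the asserted set equality therefore hold simultaneously, completing the proof.

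I expect no genuine difficulty here, as the argument is a direct computation resting on two facts already in hand: the explicit generator matrix of Theorem \ref{th-genematix} and the $\gf(q)$-linearity of $\tr_{q^m/q}$. The entire content is the bookkeeping of the double sum together with the basis bijection noted above.
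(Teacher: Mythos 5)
Your proof is correct and follows essentially the same route as the paper: both start from the generator matrix of Theorem \ref{th-genematix}, form an arbitrary $\gf(q)$-linear combination of its rows, pull the $\gf(q)$-scalars inside the trace by $\gf(q)$-linearity, and repackage each block of coefficients as an element $a_i=\sum_{\ell}c_{i\ell}\alpha_\ell$ of $\gf(q^m)$ via the basis. Your explicit remark that this repackaging is a bijection is exactly the paper's observation that $\gf(q^m)=\{\sum_{j}x_{ij}\alpha_j:(x_{i1},\ldots,x_{im})\in\gf(q)^m\}$, so no further comment is needed.
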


\begin{proof}
 We denote $\bc=(c_1,c_2,\ldots,c_n)=\bx G^{(q)}\in \C^{(q)}$ for any
$$\bx=(x_{11},\ldots,x_{1m},\ldots,x_{k1},\ldots,x_{km})\in \gf(q)^{km}.$$
Then by Theorem \ref{th-genematix}, 
$$c_h=\sum_{i=1}^{k}\sum_{j=1}^{m}\tr_{q^m/q}(g_{ih}x_{ij}\alpha_j)=\sum_{i=1}^{k}\tr_{q^m/q}\left(g_{ih}\sum_{j=1}^{m}x_{ij}\alpha_j\right),\ 1\leq h\leq n,$$
where $\{\alpha_1,\alpha_2,\cdots,\alpha_{m}\}$ is a basis of $\gf(q^m)$ over $\gf(q)$. Note that
$$\gf(q^m)=\left\{\sum_{j=1}^{m}x_{ij}\alpha_j:(x_{i1},x_{i2},\ldots,x_{im})\in \gf(q)^m\right\}$$ with any fixed $1\leq i\leq k$. Then the desired conclusion follows.
\end{proof}

We are now ready to prove the following conclusion. 

\begin{theorem}\label{thm-2ndindepent}
The subfield code $\C^{(q)}$ of $\C$ over $\gf(q^m)$ is also independent of the choice of 
the generator matrix $G$.  
\end{theorem}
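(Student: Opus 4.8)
The plan is to reduce everything to the trace representation established in Theorem \ref{th-tracerepresentation}, which parametrizes $\C^{(q)}$ by a free vector $(a_1,\ldots,a_k)\in\gf(q^m)^k$ rather than by the individual rows of a specific generator matrix. First I would invoke the standard fact from linear algebra that any two generator matrices $G$ and $\widetilde{G}$ of the same code $\C$ over $\gf(q^m)$ have the same row space, and hence are related by $\widetilde{G}=MG$ for some invertible $k\times k$ matrix $M=[m_{li}]$ over $\gf(q^m)$. Writing $G=[g_{ij}]$ and $\widetilde{G}=[\widetilde{g}_{lj}]$, this says $\widetilde{g}_{lj}=\sum_{i=1}^{k}m_{li}g_{ij}$ for all $l$ and $j$.

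Next I would apply Theorem \ref{th-tracerepresentation} to each of the two matrices, keeping the same fixed basis throughout (which is legitimate by Theorem \ref{th-independent}). Denote by $\C^{(q)}$ and $\widetilde{\C}^{(q)}$ the subfield codes obtained from $G$ and $\widetilde{G}$ respectively. A typical codeword of $\widetilde{\C}^{(q)}$ has $h$-th coordinate $\tr_{q^m/q}\!\left(\sum_{l=1}^{k}b_l\widetilde{g}_{lh}\right)$ for some $(b_1,\ldots,b_k)\in\gf(q^m)^k$. Substituting the relation above and interchanging the order of summation gives
\[
\sum_{l=1}^{k}b_l\widetilde{g}_{lh}=\sum_{i=1}^{k}\left(\sum_{l=1}^{k}b_lm_{li}\right)g_{ih},
\]
so that, setting $a_i=\sum_{l=1}^{k}b_lm_{li}$, the coordinate becomes $\tr_{q^m/q}\!\left(\sum_{i=1}^{k}a_ig_{ih}\right)$, which is exactly the form of a codeword of $\C^{(q)}$ in the representation of Theorem \ref{th-tracerepresentation}.

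Finally I would observe that the map $(b_1,\ldots,b_k)\mapsto(a_1,\ldots,a_k)=(b_1,\ldots,b_k)M$ is a bijection of $\gf(q^m)^k$ onto itself precisely because $M$ is invertible. Hence as $(b_1,\ldots,b_k)$ ranges over $\gf(q^m)^k$ the vector $(a_1,\ldots,a_k)$ ranges over all of $\gf(q^m)^k$ as well, and the two parametrized sets coincide, giving $\widetilde{\C}^{(q)}=\C^{(q)}$.

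There is no genuinely hard step here; the substance of the argument is that the trace representation of Theorem \ref{th-tracerepresentation} is intrinsically free of any reference to a particular generator matrix, and a change of generator matrix merely reshuffles the free parameters through an invertible $\gf(q^m)$-linear substitution. The only point requiring a little care is to keep the basis fixed while the generator matrix varies, so that basis-independence (Theorem \ref{th-independent}) and generator-matrix-independence are treated as two separate and clean reductions rather than being entangled.
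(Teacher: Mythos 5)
Your proof is correct and follows essentially the same route as the paper's: the paper likewise writes the second generator matrix as $G'=TG$ with $T$ invertible over $\gf(q^m)$, substitutes into the trace representation of Theorem \ref{th-tracerepresentation}, and concludes by noting that $(a'_1,\ldots,a'_k)\mapsto(a'_1,\ldots,a'_k)T$ is a bijection of $\gf(q^m)^k$. Your version merely makes the summation interchange and the separation from basis-independence (Theorem \ref{th-independent}) slightly more explicit, which is a presentational difference, not a mathematical one.
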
 

\begin{proof}
Let $G$ and $G'$ be two generator matrices of an $[n, k]$ code $\C$ over $\gf(q^m)$. 
Then there exists a $k \times k$ invertible matrix $T$ over $\gf(q^m)$ such that 
$G'=TG$. Let $\C_{G}^{(q)}$ and $\C_{G'}^{(q)}$ denote the subfield codes with 
respect to the generator matrices $G$ and $G'$, respectively. For any 
$(a'_1, a'_2, \cdots, a'_k) \in \gf(q^m)^k$, define 
$$ 
(a_1, a_2, \cdots, a_k)=(a'_1, a'_2, \cdots, a'_k)T. 
$$  
Note that $T$ is invertible. When $(a'_1, a'_2, \cdots, a'_k)$ runs over $\gf(q^m)^k$, 
so does $(a_1, a_2, \cdots, a_k)$. It then follows from Theorem \ref{th-tracerepresentation} 
that 
\begin{eqnarray*}
\C^{(q)}_{G'}
&=& \left\{\left(\tr_{q^m/q}\left(\sum_{i=1}^{k}a'_ig'_{i1}\right),
\cdots,\tr_{q^m/q}\left(\sum_{i=1}^{k}a'_ig'_{in}\right)\right): 
a'_1,\ldots,a'_k\in \gf(q^m)\right\} \\ 
&=& \left\{\left(\tr_{q^m/q}\left(\sum_{i=1}^{k}a_ig_{i1}\right),
\cdots,\tr_{q^m/q}\left(\sum_{i=1}^{k}a_ig_{in}\right)\right):a_1,\ldots,a_k\in \gf(q^m)\right\} \\
&=& \C^{(q)}_{G}. 
\end{eqnarray*} 
This completes the proof. 
\end{proof}

Summarizing Theorems \ref{th-tracerepresentation} and \ref{thm-2ndindepent}, we conclude 
that the subfield code $\C^{(q)}$ over $\gf(q)$ of a linear code $\C$ over $\gf(q^m)$ 
is independent of the choices of both $G$ and the basis of $\gf(q^m)$ over $\gf(q)$. 
So is the dual code $\C^{(q)\perp}$.  

\subsection{Relations among $\C$, $\C^\perp$, $\C^{\perp (q)}$ and $\C^{(q)\perp}$}

Denote by $\C^{\perp}$ and $\C^{(q)\perp}$ the dual codes of $\C$ and its subfield code $\C^{(q)}$, respectively. Let $\C^{\perp (q)}$ denote the subfield code of $\C^{\perp}$. 
Since the dimensions of $\C^{\perp (q)}$ and $\C^{(q)\perp}$ vary from case to case, 
there may not be a general relation between the two codes $\C^{\perp (q)}$ and $\C^{(q)\perp}$.  

A relationship between the minimal distance of $\C^{\perp}$ and that of $\C^{(q)\perp}$ is given as follows.
\begin{theorem}\label{th-dualdistance}
Let $\C$ be an $[n,k]$ linear code over $\gf(q^m)$. Then the minimal distance $d^\perp$ of $\C^{\perp}$ and the minimal distance $d^{(q)\perp}$ of $\C^{(q)\perp}$ satisfy
$$d^{(q)\perp}\geq d^\perp.$$
\end{theorem}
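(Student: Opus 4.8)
The plan is to show the inequality $d^{(q)\perp}\geq d^\perp$ by exhibiting an injective, weight-nonincreasing (in fact weight-preserving on supports) relationship between the two dual codes, or equivalently by relating their parity-check structure. The cleanest route is to argue at the level of supports: I will show that the subfield code $\C^{(q)}$ contains, in a suitable sense, enough of the structure of $\C$ so that any low-weight dual word forces a correspondingly low-weight dual word downstairs. Concretely, recall from Theorem \ref{th-tracerepresentation} that
$$
\C^{(q)}=\left\{\left(\tr_{q^m/q}\left(\sum_{i=1}^{k}a_ig_{i1}\right),\cdots,\tr_{q^m/q}\left(\sum_{i=1}^{k}a_ig_{in}\right)\right):a_1,\ldots,a_k\in \gf(q^m)\right\}.
$$
I would first fix a minimum-weight codeword $\bc=(c_1,\dots,c_n)\in \C^{(q)\perp}$ of weight $d^{(q)\perp}$, so $\bc\in\gf(q)^n$ and $\bc$ is orthogonal to every row of $G^{(q)}$, hence to every word of $\C^{(q)}$.

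The key step is to translate the orthogonality of $\bc$ with respect to $\C^{(q)}$ into orthogonality of a related vector with respect to $\C$ itself. Using the trace representation, $\bc\perp \C^{(q)}$ means that for every choice of $a_1,\dots,a_k\in\gf(q^m)$ we have $\sum_{h=1}^n c_h\,\tr_{q^m/q}\bigl(\sum_{i=1}^k a_i g_{ih}\bigr)=0$. By $\gf(q)$-linearity of the trace and since each $c_h\in\gf(q)$, this rearranges to $\tr_{q^m/q}\bigl(\sum_{i=1}^k a_i \sum_{h=1}^n c_h g_{ih}\bigr)=0$ for all $a_i\in\gf(q^m)$. Setting $S_i=\sum_{h=1}^n c_h g_{ih}\in\gf(q^m)$, the condition becomes $\tr_{q^m/q}(\sum_i a_i S_i)=0$ for all $(a_1,\dots,a_k)$, and by nondegeneracy of the trace form this forces $S_i=0$ for every $i$. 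Thus $\sum_{h=1}^n c_h g_{ih}=0$ for each row $i$ of $G$, which says precisely that $\bc$, viewed now as a vector in $\gf(q^m)^n$ (via the inclusion $\gf(q)\subseteq\gf(q^m)$), lies in $\C^{\perp}$.

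With $\bc\in\C^\perp$ established, the conclusion is immediate: $\bc$ is a nonzero word of $\C^\perp$ (it is nonzero because it is a nonzero word of $\C^{(q)\perp}$), and its Hamming weight is unchanged by the inclusion $\gf(q)\hookrightarrow\gf(q^m)$, since the support of $\bc$ is the same set of coordinates in either field. Hence $d^\perp\leq \wt(\bc)=d^{(q)\perp}$, which is the desired inequality. I expect the main obstacle to be stating the passage $\bc\in\gf(q)^n \rightsquigarrow \bc\in\gf(q^m)^n$ cleanly and verifying that orthogonality in $\gf(q)^n$ with respect to $G^{(q)}$ genuinely implies orthogonality in $\gf(q^m)^n$ with respect to $G$; the trace bilinearity computation above is the crux, and care is needed to invoke nondegeneracy of the trace form on $\gf(q^m)$ over $\gf(q)$ (equivalently, Lemma \ref{lem-dualbasis}) to deduce $S_i=0$ rather than merely that the trace vanishes. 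Everything else is a routine bookkeeping of the fact that taking supports commutes with the field inclusion.
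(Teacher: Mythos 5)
Your proof is correct and follows essentially the same route as the paper's: both take a minimum-weight word of $\C^{(q)\perp}$, use $\gf(q)$-linearity of the trace to pull the coefficients $c_h\in\gf(q)$ inside, and invoke nondegeneracy of the trace form to conclude that the same vector (with the same support, hence the same weight) lies in $\C^{\perp}$. The only cosmetic difference is that you start from the trace representation of Theorem \ref{th-tracerepresentation}, whereas the paper expands orthogonality against the basis elements $\alpha_l$ and then recovers arbitrary multipliers $\sum_{l}u_l\alpha_l\in\gf(q^m)$; this is the same computation in different bookkeeping.
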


\begin{proof}
Let $G=[g_{ij}]_{1\leq i \leq k, 1\leq j \leq n}$ be a generator matrix of $\C$. Let $G^{(q)}$ be a generator matrix of $\C^{(q)}$ given in Theorem \ref{th-genematix}. Then $G^{(q)}$ is also a parity-check matrix of $\C^{(q)\perp}$. This implies that there exist $b_1,b_2,\ldots,b_{d^{(q)\perp}}\in \gf(q)^*$ and integers $1\leq j_1 < j_2 <\cdots<j_{d^{(q)\perp}}\leq n$ such that
$$\sum_{h=1}^{d^{(q)\perp}}b_h\tr_{q^m/q}(g_{ij_h}\alpha_l)=\tr_{q^m/q}\left(\sum_{h=1}^{d^{(q)\perp}}b_hg_{ij_h}\alpha_l\right)=0$$
for all $1\leq i \leq k$ and $1\leq l \leq m$, where $\{\alpha_1,\alpha_2,\cdots,\alpha_{m}\}$ is a basis of $\gf(q^m)$ over $\gf(q)$. Hence
$$\sum_{l=1}^{m}\tr_{q^m/q}\left(\sum_{h=1}^{d^{(q)\perp}}b_hg_{ij_h}(u_l\alpha_l)\right)=\tr_{q^m/q}\left(\left(\sum_{h=1}^{d^{(q)\perp}}b_hg_{ij_h} \right) 
\sum_{l=1}^{m}u_l\alpha_l\right)=0$$
for all $1\leq i \leq k$ and $u_l\in \gf(q)$, $1\leq l \leq m$. Consequently,
$$\sum_{h=1}^{d^{(q)\perp}}b_hg_{ij_h}=0$$ for all $1\leq i \leq k$. Thus there exists a codeword with Hamming weight $d^{(q)\perp}$ in $\C^\perp$. Then the desired conclusion follows.
\end{proof} 

\subsection{Equivalence of subfield codes}

Two linear codes $\C_1$ and $\C_2$ are {\em permutation equivalent\index{permutation equivalent 
of codes}} if there is a permutation of coordinates which sends $\C_1$ to $\C_2$. If $\C_1$ and $\C_2$ 
are permutation equivalent, so are $\C_1^\perp$ and $\C_2^\perp$. Two permutation equivalent 
linear codes have the same dimension and weight distribution.  

A \emph{monomial matrix\index{monomial matrix}} over a field $\F$ is a square matrix having exactly one 
nonzero element of $\F$  in each row and column. A monomial matrix $M$ can be written either in 
the form $DP$ or the form $PD_1$, where $D$ and $D_1$ are diagonal matrices and $P$ is a permutation 
matrix. 

Let $\C_1$ and $\C_2$ be two linear codes of the same length over $\F$. Then $\C_1$ and $\C_2$ 
are \emph{monomially equivalent\index{monomially equivalent}} if there is a nomomial matrix over $\F$ 
such that $\C_2=\C_1M$. Monomial equivalence and permutation equivalence are precisely the same for 
binary codes. If $\C_1$ and $\C_2$ are monomially equivalent, then they have the same weight distribution. 

Let $\C$ and $\C'$ be two monomially equivalent $[n,k]$ code over $\gf(q^m)$. 
Let $G=[g_{ij}]$ and $G=[g'_{ij}]$ be two generator matrices of $\C$ and $\C'$, 
respectively. By definition, there exist a permutation $\sigma$ of the set 
$\{1, 2, \cdots, n\}$ and elements $b_1, b_2, \cdots, b_n$ in $\gf(q^m)^*$ 
such that 
$$ 
g_{ij}=b_jg'_{i\sigma(j)}
$$ 
for all $1 \leq i \leq k$ and $1 \leq j \leq n$. It then follows that 
\begin{eqnarray*}
\lefteqn{\left(\tr_{q^m/q}\left(\sum_{i=1}^{k}a_ig_{i1}\right),
\cdots,\tr_{q^m/q}\left(\sum_{i=1}^{k}a_ig_{in}\right)\right) } \\
&=&  
\left(\tr_{q^m/q}\left(b_1\left(\sum_{i=1}^{k}a_ig'_{i\sigma(1)}\right)\right),
\cdots,\tr_{q^m/q}\left(b_n\left(\sum_{i=1}^{k}a_ig_{in}\right)\right)\right).  
\end{eqnarray*} 
Then the following conclusions follow from Theorem \ref{th-tracerepresentation}: 
\begin{itemize}
\item If $\C$ and $\C'$ are permutation equivalent, so are $\C^{(q)}$ and $\C'^{(q)}$. 
\item If all $b_i \in \gf(q)^*$, then $\C^{(q)}$ and $\C'^{(q)}$ are monomially equivalent.  
\end{itemize} 
However, $\C^{(q)}$ and $\C'^{(q)}$ may not be monomially equivalent even if $\C$ and $\C'$ 
are monomially equivalent.  

\subsection{Historical information and remarks}

The subfield subcode $\C|_{\gf(q)}$ of an $[n, k]$ code over $\gf(q^m)$ is the set of codewords 
in $\C$ each of whose components is in $\C$. Hence, the dimension of the subfield subcode $\C|_{\gf(q)}$ is at most $k$. Thus, the subfield code over $\gf(q)$ and subfield subcode over $\gf(q)$ of a linear code over $\gf(q^m)$ are different codes in general. Subfield codes were considered 
in \cite{CCZ} and \cite{CCD} without using the name ``subfield codes". Subfield codes were 
defined formally in \cite[p. 5117]{MagmaHK} and a Magma function for subfield codes is implemented in the Magma package. The reader is warned that the subfield codes referred 
in \cite{Bier} and \cite{BE97} are actually subfied subcodes. These lead to a confusion.    
In view of the impact of the Magma computation system, we wish to follow the Magma definition 
of subfield codes.

While subfield subcodes have been well studied due to the Delsarte theorem \cite{Dels}, 
little has been done for subfield codes of linear codes over finite fields. The subfield 
codes of several families of linear codes were considered and distance-optimal codes were 
constructed in \cite{CCZ} and \cite{CCD}. In these two references, the basic idea is to 
consider the subfield code of a linear code over $\gf(q^m)$ with good parameters and expect 
the subfield code over $\gf(q)$ to have also good parameters. In this paper, we follow 
the same idea, and consider the subfield codes of ovoid codes which are optimal with respect 
to the Griesmer bound.

\section{Auxiliary results}\label{sec-pre}

In this section, we recall characters and some character sums over finite fields which will be needed in later sections.

Let $p$ be a prime and $q=p^m$. Let $\gf(q)$ be the finite field with $q$ elements and $\alpha$ a primitive element of $\gf(q)$. Let $\tr_{q/p}$ denote the trace function from $\gf(q)$ to $\gf(p)$ given by
$$\tr_{q/p}(x)=\sum_{i=0}^{m-1}x^{p^{i}},\ x\in \gf(q).$$ Denote $\zeta_p$ as the primitive $p$-th root of complex unity.

An \emph{additive character} of $\gf(q)$ is a function $\chi: (\gf(q),+)\rightarrow \bC^{*}$ such that
$$\chi(x+y)=\chi(x)\chi(y),\ x,y\in \gf(q),$$ where $\bC^{*}$ denotes the set of all nonzero complex numbers. For any $a\in \gf(q)$, the function
$$\chi_{a}(x)=\zeta_{p}^{\tr_{q/p}(ax)},\ x\in \gf(q),$$ defines an additive character of $\gf(q)$. In addition, $\{\chi_{a}:a\in \gf(q)\}$ is a group consisting of all the additive characters of $\gf(q)$. If $a=0$, we have $\chi_0(x)=1$ for all $x\in \gf(q)$ and $\chi_0$ is referred to as the trivial additive character of $\gf(q)$. If $a=1$, we call $\chi_1$ the canonical additive character of $\gf(q)$. Clearly, $\chi_a(x)=\chi_1(ax)$. 
The orthogonality  relation of additive characters is given by
$$\sum_{x\in \gf(q)}\chi_1(ax)=\left\{
\begin{array}{rl}
q    &   \mbox{ for }a=0,\\
0    &   \mbox{ for }a\in \gf(q)^*.
\end{array} \right. $$

Let $\gf(q)^*=\gf(q)\setminus \{0\}$. A \emph{character} $\psi$ of the multiplicative group $\gf(q)^*$ is a function from  $\gf(q)^*$  to $\bC^{*}$ such that $\psi(xy)=\psi(x)\psi(y)$ for all $(x,y)\in \gf(q)\times \gf(q)$. Define the multiplication of two characters $\psi,\psi'$ by $(\psi\psi')(x)=\psi(x)\psi'(x)$ for $x\in \gf(q)^*$. All the characters of $\gf(q)^*$ are given by
$$\psi_{j}(\alpha^k)=\zeta_{q-1}^{jk}\mbox{ for }k=0,1,\cdots,q-1,$$
where $0\leq j \leq q-2$. Then all these $\psi_j$, $0\leq j \leq q-2$, form a group under the multiplication of characters and are called \emph{multiplicative characters} of $\gf(q)$. In particular, $\psi_0$ is called the trivial multiplicative character and $\eta:=\psi_{(q-1)/2}$ is referred to as the quadratic multiplicative character of  $\gf(q)$.The orthogonality relation of multiplicative characters is given by
$$\sum_{x\in \gf(q)^*}\psi_j(x)=\left\{
\begin{array}{rl}
q-1    &   \mbox{ for }j=0,\\
0    &   \mbox{ for }j\neq 0.
\end{array} \right. $$

For an additive character $\chi$ and a multiplicative character $\psi$ of $\gf(q)$, the \emph{Gauss sum} $G(\psi, \chi)$ over $\gf(q)$ is defined by
$$G(\psi,\chi)=\sum_{x\in \gf(q)^*}\psi(x)\chi(x).$$
We call $G(\eta,\chi)$ the quadratic Gauss sum over $\gf(q)$ for nontrivial $\chi$. The value of the quadratic Gauss sum is known as follows.

\begin{lemma}\label{quadGuasssum}\cite[Th. 5.15]{LN}
Let $q=p^m$ with $p$ odd. Let $\chi$ be the canonical additive character of $\gf(q)$. Then
\begin{eqnarray*}G(\eta,\chi)&=&(-1)^{m-1}(\sqrt{-1})^{(\frac{p-1}{2})^2m}\sqrt{q}\\
 &=&\left\{
\begin{array}{lll}
(-1)^{m-1}\sqrt{q}    &   \mbox{ for }p\equiv 1\pmod{4},\\
(-1)^{m-1}(\sqrt{-1})^{m}\sqrt{q}    &   \mbox{ for }p\equiv 3\pmod{4}.
\end{array} \right. \end{eqnarray*}
\end{lemma}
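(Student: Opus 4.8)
The plan is to determine $G(\eta,\chi)$ in three stages: square it, settle the sign over the prime field $\gf(p)$, and then lift the result to $\gf(q)=\gf(p^m)$ by the Davenport--Hasse relation. Two classical external inputs will be needed, namely Gauss's sign determination over $\gf(p)$ and the Davenport--Hasse lifting identity; the first of these is the genuine obstacle.

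\textbf{Stage 1 (the square).} I would first compute $G(\eta,\chi)^2=\sum_{c,d\in\gf(q)^*}\eta(cd)\chi(c+d)$ and substitute $d=ct$ with $t\in\gf(q)^*$; since $\eta(c^2)=1$, this factors as $\sum_{t\in\gf(q)^*}\eta(t)\sum_{c\in\gf(q)^*}\chi\big(c(1+t)\big)$. The additive orthogonality relation quoted above evaluates the inner sum to $q-1$ when $t=-1$ and to $-1$ otherwise (subtracting the $c=0$ term), and the multiplicative orthogonality relation $\sum_{t\in\gf(q)^*}\eta(t)=0$ then collapses the whole expression to $G(\eta,\chi)^2=\eta(-1)\,q$, with $\eta(-1)=(-1)^{(q-1)/2}$. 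This already forces $G(\eta,\chi)=\pm\sqrt q$ when $q\equiv 1\pmod 4$ and $G(\eta,\chi)=\pm\sqrt{-1}\,\sqrt q$ when $q\equiv 3\pmod 4$, so all that remains is a single global sign.

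\textbf{Stage 2 (the prime field; the hard part).} The crux is the classical theorem of Gauss for $m=1$: writing $\eta_0,\chi_0$ for the quadratic and canonical characters of $\gf(p)$, one has $G(\eta_0,\chi_0)=\sqrt p$ if $p\equiv 1\pmod 4$ and $G(\eta_0,\chi_0)=\sqrt{-1}\,\sqrt p$ if $p\equiv 3\pmod 4$. Stage~1 already supplies the modulus and the correct imaginary type, so only a sign is at stake, and I would fix it by Schur's eigenvalue argument: the Fourier matrix $M=\big(\zeta_p^{jk}\big)_{0\le j,k\le p-1}$ satisfies $M^4=p^2 I$, so its eigenvalues lie in $\{\pm\sqrt p,\pm\sqrt{-1}\,\sqrt p\}$; meanwhile $\tr M=\sum_{k=0}^{p-1}\zeta_p^{k^2}=G(\eta_0,\chi_0)$, and computing the eigenvalue multiplicities independently (from $\det M$ and $\tr M^2$) and matching them against this trace nails the sign. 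I expect this to be the only genuinely delicate step of the whole proof.

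\textbf{Stage 3 (lifting to $\gf(q)$).} Finally I would invoke the Davenport--Hasse relation. Because the quadratic character of $\gf(q)$ is $\eta=\eta_0\circ\Norm_{q/p}$ (the unique order-$2$ character, obtained from $\eta_0$ through the surjective norm) and the canonical additive character is $\chi=\chi_0\circ\tr_{q/p}$, the relation reads $-G(\eta,\chi)=\big(-G(\eta_0,\chi_0)\big)^m$, hence $G(\eta,\chi)=(-1)^{m-1}G(\eta_0,\chi_0)^m$. Substituting the two prime-field values from Stage~2 yields $(-1)^{m-1}\sqrt q$ when $p\equiv 1\pmod 4$ and $(-1)^{m-1}(\sqrt{-1})^m\sqrt q$ when $p\equiv 3\pmod 4$. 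A short check that $\big(\tfrac{p-1}{2}\big)^2\equiv 0\pmod 4$ when $p\equiv 1\pmod 4$ and $\equiv 1\pmod 4$ when $p\equiv 3\pmod 4$ (so that $(\sqrt{-1})^{(\frac{p-1}{2})^2 m}$ equals $1$ in the first case and $(\sqrt{-1})^m$ in the second) merges the two cases into the single closed form $(-1)^{m-1}(\sqrt{-1})^{(\frac{p-1}{2})^2 m}\sqrt q$ of the statement.
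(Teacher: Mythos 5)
The paper itself contains no proof of this lemma: it is quoted, with citation, from Lidl--Niederreiter [Th.\ 5.15], so the only meaningful comparison is with the classical proof in the literature, and your proposal is essentially that proof, carried out correctly. Stage 1 is the standard computation $G(\eta,\chi)^2=\eta(-1)q$; your substitution $d=ct$, the evaluation of the inner sum as $q-1$ or $-1$, and the final collapse via $\sum_{t}\eta(t)=0$ are all right. Stage 3 is exactly the Davenport--Hasse reduction, and it applies for the reasons you give: $\eta_0\circ\Norm_{q/p}$ is a nontrivial character of order $2$ of $\gf(q)^*$ (the norm is surjective), hence equals $\eta$, while $\chi=\chi_0\circ\tr_{q/p}$ is by definition the canonical character, so $G(\eta,\chi)=(-1)^{m-1}G(\eta_0,\chi_0)^m$; your closing congruence check that $(\frac{p-1}{2})^2\equiv 0\pmod 4$ when $p\equiv 1\pmod 4$ and $\equiv 1\pmod 4$ when $p\equiv 3\pmod 4$ correctly merges the two cases into the single closed form of the statement. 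The one place where you commit to a particular route is Stage 2, the prime-field sign, which you rightly flag as the genuinely hard kernel. Schur's spectral argument does work and your sketch of it is sound: $M^2=p\,P$ with $P$ the negation permutation gives total multiplicity $\frac{p+1}{2}$ for the eigenvalues $\pm\sqrt p$ and $\frac{p-1}{2}$ for $\pm\sqrt{-1}\,\sqrt p$, and the Vandermonde evaluation of $\det M$ then pins down the individual multiplicities and hence $\tr M=G(\eta_0,\chi_0)$; but as written this stage is a sketch, and the determinant computation (including its phase, not just its modulus) is where the real work of the whole proof is concentrated. Provided Gauss's sign theorem, or the completed Schur argument, is accepted as the classical external ingredient, your proof is complete and matches the intended one.
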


Let $\chi$ be a nontrivial character of $\gf(q)$ and let $f\in \gf(q)[x]$ be a polynomial of positive degree. The character sums of the form
$$\sum_{c\in \gf(q)}\chi(f(c))$$ are referred to as \emph{Weil sums}. The problem of evaluating
such character sums explicitly is very difficult in general. In certain special cases, Weil sums can be treated (see \cite[Section 4 in Chapter 5]{LN}). If $f$ is a quadratic polynomial, the Weil sum has an interesting relationship with quadratic Gauss sums, which is described in the 
following lemma.

\begin{lemma}\label{lem-charactersum}\cite[Th. 5.33]{LN}
Let $\chi$ be a nontrivial additive character of $\gf(q)$ with $q$ odd, and let $f(x)=a_2x^2+a_1x+a_0\in \gf(q)[x]$ with $a_2\neq 0$. Then
$$\sum_{c\in \gf(q)}\chi(f(c))=\chi(a_0-a_1^2(4a_2)^{-1})\eta(a_2)G(\eta,\chi).$$
\end{lemma}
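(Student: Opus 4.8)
The plan is to reduce the Weil sum to a quadratic Gauss sum by completing the square, which is possible precisely because $q$ is odd and hence $2$ (and so $4a_2$) is invertible in $\gf(q)$. First I would record the identity
$$f(x)=a_2\left(x+\frac{a_1}{2a_2}\right)^2+\left(a_0-\frac{a_1^2}{4a_2}\right),$$
verified by expanding the square. Applying the multiplicativity $\chi(u+v)=\chi(u)\chi(v)$, the constant term factors out, giving
$$\sum_{c\in\gf(q)}\chi(f(c))=\chi\!\left(a_0-\frac{a_1^2}{4a_2}\right)\sum_{c\in\gf(q)}\chi\!\left(a_2\left(c+\frac{a_1}{2a_2}\right)^2\right).$$
Since the shift $c\mapsto c+a_1(2a_2)^{-1}$ permutes $\gf(q)$, the remaining sum equals $\sum_{y\in\gf(q)}\chi(a_2y^2)$, so the whole task reduces to evaluating this single sum and identifying it with $\eta(a_2)G(\eta,\chi)$.

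For the reduced sum I would count preimages under squaring. For each $t\in\gf(q)$ the equation $y^2=t$ has exactly $1+\eta(t)$ solutions (using the convention $\eta(0)=0$), so grouping the terms of $\sum_y\chi(a_2y^2)$ by the value $t=y^2$ yields
$$\sum_{y\in\gf(q)}\chi(a_2y^2)=\sum_{t\in\gf(q)}(1+\eta(t))\chi(a_2t)=\sum_{t\in\gf(q)}\chi(a_2t)+\sum_{t\in\gf(q)^*}\eta(t)\chi(a_2t).$$
The first sum vanishes by the orthogonality relation for additive characters, since writing $\chi=\chi_b$ with $b\neq 0$ (as $\chi$ is nontrivial) and $a_2\neq 0$ makes $ba_2\neq 0$. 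In the second sum I would substitute $u=a_2t$; using $\eta(a_2^{-1})=\eta(a_2)^{-1}=\eta(a_2)$ (because $\eta$ is quadratic and thus $\pm1$-valued) it becomes $\eta(a_2)\sum_{u\in\gf(q)^*}\eta(u)\chi(u)=\eta(a_2)G(\eta,\chi)$ by the definition of the Gauss sum. Combining the two displays gives the claimed formula.

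The individual steps are routine; the one point needing care is the count $1+\eta(t)$ of square roots together with the correct treatment of $t=0$ and the relation $\eta(a_2^{-1})=\eta(a_2)$, which is where the quadratic character enters and produces the factor $\eta(a_2)$. I do not expect a genuine obstacle: once the sum is reduced to $\sum_y\chi(a_2y^2)$, the identification with $G(\eta,\chi)$ is forced by orthogonality, and the explicit numerical value of $G(\eta,\chi)$ is not needed here, being supplied separately by Lemma \ref{quadGuasssum}.
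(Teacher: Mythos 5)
Your proof is correct, and it is essentially the argument behind the result the paper cites: the paper gives no proof of this lemma, referring instead to \cite[Th.~5.33]{LN}, whose standard proof is exactly your route of completing the square (valid since $q$ is odd), shifting the summation variable, and evaluating $\sum_{y\in\gf(q)}\chi(a_2y^2)$ via the $1+\eta(t)$ count of square roots together with orthogonality and $\eta(a_2^{-1})=\eta(a_2)$. All the delicate points you flag (the convention $\eta(0)=0$, the treatment of $t=0$, and the nontriviality of $\chi$ making the plain additive sum vanish) are handled correctly, so nothing further is needed.
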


\section{The subfield codes of the elliptic quadric codes}

Let $q=p^m>2$ with $p$ a prime. Let 
$\V$ be the elliptic quadric defined by 
$$\V=\{(0,0,1,0)\}\cup \{(x,y,x^2+xy+ay^2,1):x,y\in \Bbb \gf(q)\},$$ 
where $a\in \gf(q)$ is such that the polynomial $x^2+x+a$ has no root in $\gf(q)$.
Our task in this section is to study the subfield code $\C_{\V}^{(p)}$ of the elliptic quadric 
code $\C_{\V}$. 

Let $\alpha$ be a primitive element of $\gf(q)$. 
Denote
$$f_1(x,y)=x,\ f_2(x,y)=y,\ f_3(x,y)=x^2+xy+ay^2$$ and
$$G_{x,y}=\begin{bmatrix} f_1(x,y)\\ f_2(x,y)\\ f_3(x,y) \\ 1 \end{bmatrix}_{(x,y)\in \gf(q)^2} $$
which is a $4\times q^2$ matrix over $\gf(q)$. Let $\C_{\V}$ be the linear code over $\gf(q)$ with generator matrix
$$G_{\V}=\begin{bmatrix} G_{x,y}  \begin{array}{c}
0 \\
0\\
1\\
0\end{array}
\end{bmatrix}.$$

Combining the definition of $G_{\V}$ and Theorem \ref{th-tracerepresentation} yields the
following trace representation of $\C_{\V}^{(p)}$: 
\begin{eqnarray*}
\C_{\V}^{(p)}&=&\left\{\left(\left(\tr_{q/p}(uf_1(x,y)+vf_2(x,y)+wf_3(x,y))+h\right)_{(x,y)\in \gf(q)^{2}},\tr_{q/p}(w)\right):\myatop {u,v,w \in \gf(q)}{h\in \gf(p)}\right\}\\
&=&\left\{\left(\left(\tr_{q/p}(g(x,y))+h\right)_{(x,y)\in \gf(q)^{2}},\tr_{q/p}(w)\right):\myatop {u,v,w \in \gf(q)}{h\in \gf(p)}\right\}
\end{eqnarray*}
where $g(x,y):=uf_1(x,y)+vf_2(x,y)+wf_3(x,y))=ux+vy+wx^2+wxy+way^2$.

The weight distribution of $\C_{\V}^{(p)}$ will be settled separately in the following 
two cases.

\subsection{The case $p=2$}

In the case that $p=2$ and $q=2^m >2$, the weight distribution of $\C_{\V}^{(p)}$ is documented 
in the following theorem. 

\begin{theorem}\label{th-p=2}
Let $p=2$ and $m>1$. Then $\C_{\V}^{(p)}$ is a six-weight binary linear code with parameters $[2^{2m}+1,3m+1,2^{2m-1}-2^{m-1}]$ and the weight distribution in Table \ref{tab-1}. Its dual $\C_{\V}^{(p)\perp}$ has parameters $[2^{2m}+1,2^{2m}-3m,4]$.
\end{theorem}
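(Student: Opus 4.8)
The plan is to compute the Hamming weight of every codeword directly from the trace representation of $\C_{\V}^{(p)}$ given just before the theorem. A codeword is indexed by $(u,v,w,h)\in \gf(q)^3\times\gf(p)$, and since $p=2$ its weight splits into two pieces: the number of $(x,y)\in\gf(q)^2$ with $\tr_{q/2}(g(x,y))+h\neq 0$, plus the contribution $[\tr_{q/2}(w)=1]$ of the last coordinate. Writing $\chi_1(z)=(-1)^{\tr_{q/2}(z)}$ and $S=\sum_{(x,y)}\chi_1(g(x,y))$, the first piece equals $\tfrac{q^2}{2}-\tfrac12(-1)^h S$, so everything reduces to evaluating the character sum $S$.

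The key point is that in characteristic $2$ no quadratic Gauss sums are needed. For $w=0$ the exponent is linear, so $S=q^2$ when $u=v=0$ and $S=0$ otherwise. For $w\neq 0$ I would sum over $x$ first: using $\tr_{q/2}(wx^2)=\tr_{q/2}(w^{1/2}x)$, the inner sum $\sum_x\chi_1(wx^2+(wy+u)x)$ collapses to a linear character sum equal to $q$ exactly when $y=y_0:=w^{-1/2}+uw^{-1}$ and to $0$ otherwise. Hence $S=q\,(-1)^{\tr_{q/2}(way_0^2+vy_0)}$, so $S=\pm q$. Substituting $y_0$ and using the defining property of $a$ (namely $\tr_{q/2}(a)=1$, equivalent to $x^2+x+a$ having no root in $\gf(q)$), the exponent becomes $1+\tr_{q/2}(au^2w^{-1}+uvw^{-1}+vw^{-1/2})=:1+E(u,v)$.

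Combining the three pieces yields exactly six nonzero weights: $2^{2m-1}$ (from $w=0$, $(u,v)\neq(0,0)$), $2^{2m}$ (from $u=v=w=0$, $h=1$), and the four values $2^{2m-1}\mp 2^{m-1}$ and $2^{2m-1}\mp 2^{m-1}+1$ (from $w\neq 0$; the $\pm$ governed by $\delta:=-(-1)^{h+E}$ and the $+1$ by whether $\tr_{q/2}(w)=1$). These are distinct precisely when $m>1$, which explains the hypothesis. For the frequencies I would make one further evaluation: for fixed $w\neq 0$, summing over $v$ first gives $\sum_{u,v}(-1)^{E(u,v)}=-q$, so exactly $(q^2-q)/2$ pairs have $E=0$ and $(q^2+q)/2$ have $E=1$. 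Distributing over $h\in\gf(2)$ and over the $2^{m-1}-1$ nonzero $w$ with $\tr_{q/2}(w)=0$ versus the $2^{m-1}$ with $\tr_{q/2}(w)=1$ produces Table~\ref{tab-1}; verifying that the frequencies sum to $2^{3m+1}$ confirms, together with injectivity of $(u,v,w,h)\mapsto$ codeword (the zero codeword forces $w=0$, since the symplectic form $\tr_{q/2}(w(xy'+x'y))$ is nondegenerate for $w\neq 0$, and then $u=v=h=0$), that the dimension is exactly $3m+1$.

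For the dual the length $2^{2m}+1$ and dimension $2^{2m}-3m$ are immediate. Its distance satisfies $d^{(p)\perp}\geq 4$ by Theorem~\ref{th-dualdistance} applied to $\C_{\V}$, whose dual has parameters $[q^2+1,q^2-3,4]$ and hence minimum distance $4$. For the matching upper bound I would exhibit an explicit weight-$4$ dual word: a subset $T\subseteq\gf(q)^2$ of columns (taken with all coefficients $1$) is a dual word exactly when $\sum_T x=\sum_T y=\sum_T(x^2+xy+ay^2)=0$ and $|T|$ is even, and $T=\{(0,0),(1,0),(\lambda,0),(1+\lambda,0)\}$ with $\lambda\in\gf(q)\setminus\{0,1\}$ (which exists since $m>1$) satisfies all four conditions, giving $d^{(p)\perp}\leq 4$ and hence $d^{(p)\perp}=4$. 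The main obstacle is not the character sum — characteristic $2$ makes it collapse cleanly — but the bookkeeping in the counting step: correctly splitting the $w\neq 0$ codewords simultaneously by the sign $\delta$ and by $\tr_{q/2}(w)$ so that the six frequencies in Table~\ref{tab-1} come out exactly right.
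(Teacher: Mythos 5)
Your proof is correct, and its skeleton matches the paper's (trace representation of $\C_{\V}^{(2)}$, a character-sum evaluation of each codeword weight split into the cases $w=0$ and $w\neq 0$, dimension $3m+1$ from the fact that only the zero tuple yields the zero codeword, and the lower bound $d^{(2)\perp}\geq 4$ from Theorem \ref{th-dualdistance} combined with Theorem \ref{th-dualdistanceofC}), but you diverge at two substantive steps. First, for $w\neq 0$ the paper never determines the sign of $\Delta=\sum_{(x,y)}(-1)^{\tr_{q/2}(g(x,y))}$: it squares the sum and uses the difference substitution to get $\Delta^2=q^2$, hence $N_0=2^{2m-1}\pm 2^{m-1}$, and this sign ambiguity is harmless because in characteristic $2$ the two choices of $h$ pair each $N_0$ with its complement $N_1$, so each pair $(u,v,w)$ with $w \neq 0$ contributes one codeword of each of the two weights and the frequencies in Table \ref{tab-1} follow without knowing which is which; this sign-free squaring argument also recycles verbatim for the Tits ovoid code in Theorem \ref{th-anotherovoidcode}. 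You instead evaluate the sum exactly, linearizing via $\tr_{q/2}(wx^2)=\tr_{q/2}(w^{1/2}x)$ so the $x$-sum collapses to the condition $y=y_0$, with the irreducibility hypothesis entering as $\tr_{q/2}(a)=1$; this costs a little more algebra (plus the auxiliary count $\sum_{u,v}(-1)^{E(u,v)}=-q$) but yields strictly finer information, namely exactly which tuples $(u,v,w,h)$ produce which weight --- though, consistently with the paper's remark that the weight distribution of $\C_{\V}^{(2)}$ is independent of $a$, the value $\tr_{q/2}(a)$ is ultimately immaterial to the count. (Your symplectic-form detour for injectivity is unnecessary: your weight formulas already show every nonzero tuple has positive weight, which is precisely the paper's ``$A_0=1$'' argument.) Second, for the upper bound $d^{(2)\perp}\leq 4$ the paper applies the sphere-packing bound to the parameters $[2^{2m}+1, 2^{2m}-3m]$, whereas you exhibit the explicit weight-$4$ dual word supported on $\{(0,0),(1,0),(\lambda,0),(1+\lambda,0)\}$ with $\lambda\notin\{0,1\}$, which indeed satisfies $\sum x=\sum y=\sum(x^2+xy+ay^2)=0$ with even support size, so it is orthogonal to every codeword in the trace representation. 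Your construction is more elementary and constructive; the paper's packing argument has the side benefit of certifying that the dual code is distance-optimal, a point the authors emphasize in their concluding remarks.
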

\begin{table}[ht]
\begin{center}
\caption{The weight distribution of $\C_{\V}^{(p)}$ for $p=2$}\label{tab-1}
\begin{tabular}{cc} \hline
Weight  &  Multiplicity   \\ \hline
$0$          &  $1$ \\
$2^{2m}$  &  $1$ \\
$2^{2m-1}$  & $2(2^{2m}-1)$ \\
$2^{2m-1}-2^{m-1}$    & $2^{2m}(2^{m-1}-1)$ \\
$2^{2m-1}+2^{m-1}$    & $2^{2m}(2^{m-1}-1)$ \\
$2^{2m-1}-2^{m-1}+1$    & $2^{3m-1}$ \\
$2^{2m-1}+2^{m-1}+1$    & $2^{3m-1}$ \\
\hline
\end{tabular}
\end{center}
\end{table}

\begin{proof}
Firstly, assume that $(u,v,w)\neq (0,0,0)$.
Denote
$$N_0(u,v,w)=\sharp\{(x,y)\in \gf(q)^2:\tr_{q/p}(g(x,y))=0\}$$ and $$N_1(u,v,w)=\sharp\{(x,y)\in \gf(q)^2:\tr_{q/p}(g(x,y))=1\}.$$ By the orthogonality relation of additive characters, we have
\begin{eqnarray}\label{eqn-1}
\nonumber 2N_0(u,v,w)&=&\sum_{(x,y)\in \gf(q)^2}\sum_{z\in \gf(2)}(-1)^{z\tr(g(x,y))}\\
&=&q^2+\sum_{(x,y)\in \gf(q)^2}(-1)^{\tr(g(x,y))}.
\end{eqnarray}
We discuss the value of $N_0(u,v,w)$ in the following cases.
\begin{enumerate}
\item If $w=0$, we have $g(x,y)=ux+vy$. Since $(u,v)\neq (0,0)$, we deduce that $N_0(u,v,w)=q^2/2=2^{2m-1}$.
\item If $w\neq 0$, we denote
$$\Delta=\sum_{(x,y)\in \gf(q)^2}(-1)^{\tr(g(x,y))}.$$ Then
\begin{eqnarray*}
\Delta^2&=&\left(\sum_{(x,y)\in \gf(q)^2}(-1)^{\tr_{2^m/2}(-g(x,y))}\right)\left(\sum_{(x_1,y_1)\in \gf(q)^2}(-1)^{\tr_{2^m/2}(g(x_1,y_1))}\right)\\
&=&\sum_{(x,y)\in \gf(q)^2}\sum_{(x_1,y_1)\in \gf(q)^2}(-1)^{\tr_{2^m/2}(g(x_1,y_1)-g(x,y))}\\
&=&\sum_{(x,y)\in \gf(q)^2}\sum_{(A,B)\in \gf(q)^2}(-1)^{\tr_{2^m/2}(g(x+A,y+B)-g(x,y))}\\
&=&q^2-\sum_{(A,B)\in \gf(q)^2\setminus \{(0,0)\}}\sum_{(x,y)\in \gf(q)^2}(-1)^{\tr_{2^m/2}(g(x+A,y+B)-g(x,y))}\\
&=&q^2-\sum_{(A,B)\in \gf(q)^2\setminus \{(0,0)\}}\sum_{(x,y)\in \gf(q)^2}(-1)^{\tr_{2^m/2}(uA+vB+wA^2+waB^2+wBx+wAy)}\\
&=&q^2-\sum_{(A,B)\in \gf(q)^2\setminus \{(0,0)\}}(-1)^{\tr_{2^m/2}(uA+vB+wA^2+waB^2)}\sum_{x\in \gf(q)}(-1)^{\tr_{2^m/2}(wBx)}\\
& &\times \sum_{y\in \gf(q)}(-1)^{\tr_{2^m/2}(wAy)}\\
&=&q^2,
\end{eqnarray*}
where we used the variable substitution $x_1=x+A,y_1=y+B$ in the third equality and the last equality holds due to the orthogonality relation of additive characters. By Equation (\ref{eqn-1}), $\Delta$ is an integer. Hence $\Delta=\pm q$ and Equation (\ref{eqn-1}) implies
$$N_0(u,v,w)=2^{2m-1}\pm 2^{m-1}.$$
\end{enumerate}
Combining the two cases above yields
\begin{eqnarray*}
N_0(u,v,w)=\left\{
\begin{array}{ll}
2^{2m-1}    &   \mbox{ for }w=0,\\
2^{2m-1}\pm 2^{m-1}    &   \mbox{ for }w\neq 0,\\
\end{array} \right.
\end{eqnarray*}
where $(u,v,w)\neq (0,0,0)$ and $N_1(u,v,w)=2^{2m}-N_0(u,v,w)$.

For any codeword $\bc(u,v,w,h):=\left(\left(\tr_{2^m/2}(g(x,y))+h\right)_{(x,y)\in \gf(2^m)^{2}},\tr_{2^m/2}(w)\right)\in \C_{\V}^{(2)}$, by the foregoing discussions we deduce that
\begin{eqnarray*}
\wt(\bc(u,v,w,h))&=&\left\{
\begin{array}{ll}
0    &   \mbox{ for }(u,v,w,h)=(0,0,0,0) \\
2^{2m}    &   \mbox{ for }(u,v,w,h)=(0,0,0,1) \\
N_1(u,v,w) & \mbox{ for }w=h=0,\ (u,v)\neq (0,0) \\
N_0(u,v,w) & \mbox{ for }w=0,\ h=1,\ (u,v)\neq (0,0) \\
N_1(u,v,w) & \mbox{ for }h=0,\ w\neq 0, \tr_{2^m/2}(w)=0,\ (u,v)\in\gf(q)^2 \\
N_0(u,v,w) & \mbox{ for }h=1,\ w\neq 0, \tr_{2^m/2}(w)=0,\ (u,v)\in \gf(q)^2 \\
N_1(u,v,w)+1 & \mbox{ for }h=0,\ \tr_{2^m/2}(w)\neq0,\ (u,v)\in \gf(q)^2 \\
N_0(u,v,w)+1 & \mbox{ for }h=1,\ \tr_{2^m/2}(w)\neq0,\ (u,v)\in \gf(q)^2 \\
\end{array} \right.\\
&=&\left\{\begin{array}{ll}
0    &   \mbox{ with 1 time},\\
2^{2m}    &   \mbox{ with 1 time},\\
2^{2m-1} & \mbox{ with }2(2^{2m}-1)\ \mbox{times},\\
2^{2m-1}+2^{m-1} & \mbox{ with }2^{2m}(2^{m-1}-1)\ \mbox{times},\\
2^{2m-1}-2^{m-1} & \mbox{ with }2^{2m}(2^{m-1}-1)\ \mbox{times},\\
2^{2m-1}+2^{m-1}+1 & \mbox{ with }2^{3m-1}\ \mbox{times},\\
2^{2m-1}-2^{m-1}+1 & \mbox{ with }2^{3m-1}\ \mbox{times},
\end{array} \right.
\end{eqnarray*}
where the frequency of each weight is very easy to derive.
Since $A_0=1$, the dimension of $\C_{\V}^{(2)}$ is $3m+1$.

Note that $\C_{\V}^{(2)\perp}$ has length $2^{2m}+1$ and dimension $2^{2m}-3m$. It follows 
from Theorems \ref{th-dualdistance} and \ref{th-dualdistanceofC} that the minimal distance $d^{(2)\perp}$ of $\C_{\V}^{(2)\perp}$ satisfies $d^{(2)\perp}\geq 4$. By the sphere-packing bound, we have
$$2^{2^{2m}+1}\geq 2^{2^{2m}-3m}\left(\sum_{i=0}^{^{\left\lfloor\frac{d^{(2)\perp}-1}{2}\right\rfloor}}\binom{2^{2m}+1}{i}\right),$$ 
which implies that $d^{(2)\perp}\leq 4$, where $\lfloor x\rfloor$ 
is the floor function. Thus $d^{(2)\perp}=4$. Then the desired conclusions follow.
\end{proof}

By the proof of Theorem \ref{th-p=2}, the weight distribution of $\C_{\V}^{(2)}$ is independent of $a$. However, this will not be true for $\C_{\V}^{(p)}$ for odd $p$.

\subsection{The case $p>2$}

In the following, we investigate the weight distributions of $\C_{\V}^{(p)}$ for $p>2$.
We first present some lemmas below.

\begin{lemma}\label{lem-eta}
Let $q$ be odd and $\eta$ the quadratic multiplicative character of $\gf(q)$. Let $x^2+x+a$ be irreducible over $\gf(q)$. Then $\eta(a-4^{-1})=(-1)^{(q+1)/2}$.
\end{lemma}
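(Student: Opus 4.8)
The plan is to reduce the irreducibility hypothesis to a single non-square condition via completing the square, and then to translate that condition into a value of $\eta$ using the well-known formula $\eta(-1)=(-1)^{(q-1)/2}$.

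First I would complete the square. Since $q$ is odd, $2$ and $4$ are invertible in $\gf(q)$, so
$$x^2+x+a=(x+2^{-1})^2+(a-4^{-1}).$$
Hence $x^2+x+a$ has a root in $\gf(q)$ if and only if the equation $(x+2^{-1})^2=4^{-1}-a$ is solvable, i.e.\ if and only if $4^{-1}-a=-(a-4^{-1})$ is a square in $\gf(q)$.

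Next I would invoke the irreducibility hypothesis. Because $x^2+x+a$ is irreducible over $\gf(q)$, it has no root in $\gf(q)$; in particular $a\neq 4^{-1}$ (otherwise $-2^{-1}$ would be a root), so $a-4^{-1}\in\gf(q)^*$ and $\eta(a-4^{-1})$ is defined and nonzero. Moreover, by the previous step $-(a-4^{-1})$ is a non-square, so $\eta(-(a-4^{-1}))=-1$, that is,
$$\eta(-1)\,\eta(a-4^{-1})=-1.$$

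Finally, recalling that $\eta(-1)=(-1)^{(q-1)/2}$ and that $\eta$ takes values in $\{\pm 1\}$ on $\gf(q)^*$, I would solve for $\eta(a-4^{-1})$:
$$\eta(a-4^{-1})=-\eta(-1)^{-1}=-(-1)^{(q-1)/2}=(-1)^{(q+1)/2},$$
which is the claimed identity. There is no genuine obstacle here; the only points demanding care are the invertibility of $2$ and $4$ (guaranteed by $q$ odd), the observation that irreducibility forces $a-4^{-1}\neq 0$ so that $\eta$ is applied to a nonzero argument, and correctly bookkeeping the sign through $\eta(-1)=(-1)^{(q-1)/2}$.
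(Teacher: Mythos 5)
Your proof is correct and rests on the same two facts as the paper's: irreducibility of $x^2+x+a$ is equivalent to $-(a-4^{-1})$ (the discriminant up to the square factor $4$) being a non-square, together with $\eta(-1)=(-1)^{(q-1)/2}$. The only difference is presentational --- you carry the sign through the identity $\eta(-(a-4^{-1}))=\eta(-1)\,\eta(a-4^{-1})$ in one uniform computation, whereas the paper splits into the cases $q\equiv 1,3\pmod{4}$ and argues each by contradiction; both are valid.
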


\begin{proof}
Let $\alpha$ be a primitive element of $\gf(q)$. It is easily seen that
\begin{eqnarray*}\eta(-1)=\left\{\begin{array}{ll}
-1 & \mbox{ for }q\equiv 3\pmod{4},\\
1 & \mbox{ for }q\equiv 1\pmod{4}.
\end{array} \right.
\end{eqnarray*}

For $q\equiv 3\pmod{4}$, suppose that $\eta(a-4^{-1})=-1$. Then $a-4^{-1}=\alpha^{2j+1}$ for some $0\leq j\leq\frac{q-3}{2}$. The  discriminant of $x^2+x+a=0$ equals $-\alpha^{2j+1}$ which is a square in $\gf(q)$ as $\eta(-1)=-1$. This contradicts with the fact that $x^2+x+a$ is irreducible. Hence,  $\eta(a-4^{-1})=1$.

For $q\equiv 1\pmod{4}$, suppose that $\eta(a-4^{-1})=1$. Then $a-4^{-1}=\alpha^{2j}$ for some $0\leq j\leq\frac{q-3}{2}$. The  discriminant of $x^2+x+a=0$ equals $-\alpha^{2j}$ which is a square in $\gf(q)$ as $\eta(-1)=1$. This contradicts with the fact that $x^2+x+a$ is irreducible. Hence,  $\eta(a-4^{-1})=-1$. 
Then the desired conclusions follow.
\end{proof}

\begin{lemma}\label{lem-eta-reducible}
Let $q$ be odd and $\eta$ the quadratic multiplicative character of $\gf(q)$. Let $x^2+x+a$ be reducible over $\gf(q)$ and $a\neq 4^{-1}$. Then $\eta(a-4^{-1})=(-1)^{(q-1)/2}$.
\end{lemma}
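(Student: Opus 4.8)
The plan is to reduce the statement to the standard evaluation $\eta(-1) = (-1)^{(q-1)/2}$ by way of the discriminant of the quadratic. First I would record that the discriminant of $x^2 + x + a$ is $1 - 4a$, and rewrite it as $1 - 4a = -4(a - 4^{-1})$. Since $a \neq 4^{-1}$ this quantity is nonzero, so for odd $q$ the hypothesis that $x^2 + x + a$ is reducible over $\gf(q)$ is equivalent to the discriminant being a nonzero square, i.e. $\eta(1 - 4a) = 1$.

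Next I would exploit the multiplicativity of $\eta$. Writing $\eta(1 - 4a) = \eta(-4)\,\eta(a - 4^{-1})$ and noting that $4 = 2^2$ is a square so that $\eta(4) = 1$ and hence $\eta(-4) = \eta(-1)$, the relation $\eta(1-4a) = 1$ becomes $\eta(-1)\,\eta(a - 4^{-1}) = 1$. Since $\eta(-1) = \pm 1$, this yields $\eta(a - 4^{-1}) = \eta(-1)$.

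Finally I would invoke the standard fact $\eta(-1) = (-1)^{(q-1)/2}$ (equivalently, $-1$ is a square in $\gf(q)$ precisely when $q \equiv 1 \pmod 4$), which is exactly the computation of $\eta(-1)$ recorded at the start of the proof of Lemma \ref{lem-eta}. Combining gives $\eta(a - 4^{-1}) = (-1)^{(q-1)/2}$, as desired. As a consistency check, the same argument with reducibility replaced by irreducibility forces $\eta(1 - 4a) = -1$ and hence $\eta(a - 4^{-1}) = -\eta(-1) = (-1)^{(q+1)/2}$, recovering Lemma \ref{lem-eta}.

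There is no real obstacle here: the only content is the equivalence between reducibility and the discriminant being a (nonzero) square, which is routine in odd characteristic once $a \neq 4^{-1}$ rules out the repeated-root case of zero discriminant. Alternatively, one could mirror the contradiction-based case analysis of Lemma \ref{lem-eta}, splitting on $q \bmod 4$ and writing $a - 4^{-1}$ as an even or odd power of a primitive element; but the discriminant computation above is shorter and makes the parallel with Lemma \ref{lem-eta} transparent.
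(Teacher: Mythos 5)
Your proposal is correct and is essentially the paper's own argument: the paper writes reducibility (with $a\neq 4^{-1}$) as $a=4^{-1}-b^{2}$ with $b\in\gf(q)^{*}$, i.e.\ $a-4^{-1}=-b^{2}$, which is exactly your statement that the discriminant $1-4a=-4(a-4^{-1})$ is a nonzero square, and both proofs then conclude via $\eta(a-4^{-1})=\eta(-1)=(-1)^{(q-1)/2}$. The only difference is presentational (completing the square and substituting, versus factoring $\eta(1-4a)=\eta(-1)\,\eta(a-4^{-1})$ by multiplicativity).
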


\begin{proof}
Since $x^2+x+a$ is reducible over $\gf(q)$ and $a\neq 4^{-1}$, we have $a=4^{-1}-b^2$ for some $b\in \gf(q)^*$. Hence, $\eta(a-4^{-1})=\eta(-b^2)=\eta(-1)$. Recall that
\begin{eqnarray*}\eta(-1)=\left\{\begin{array}{ll}
-1 & \mbox{ for }q\equiv 3\pmod{4},\\
1 & \mbox{ for }q\equiv 1\pmod{4}.
\end{array} \right.
\end{eqnarray*}
Then the desired conclusion follows.
\end{proof}

\begin{lemma}\label{lem-odd}
Let $q=p^m$ with $p$ odd. Then $(\frac{p-1}{2})^{2}m+\frac{q+1}{2}$ is an odd integer.
\end{lemma}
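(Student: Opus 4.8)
The plan is to reduce the statement to a parity computation and to organize it by the residue of $p$ modulo $4$ together with the parity of $m$, exactly mirroring the dichotomy $q\equiv 1$ versus $q\equiv 3 \pmod 4$ already exploited in Lemmas \ref{lem-eta} and \ref{lem-eta-reducible}. First I would record that, since $p$ is odd, $q=p^m$ is odd, so both $\frac{p-1}{2}$ and $\frac{q+1}{2}$ are genuine integers and the assertion is meaningful. Then I would treat the two summands separately and compute the parity of each.

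For the first summand, the key remark is that $n^2\equiv n \pmod 2$ for every integer $n$, so $\left(\frac{p-1}{2}\right)^2 m \equiv \frac{p-1}{2}\,m \pmod 2$; hence this term is even when $\frac{p-1}{2}$ is even (that is, $p\equiv 1\pmod 4$) and has the parity of $m$ when $\frac{p-1}{2}$ is odd (that is, $p\equiv 3\pmod 4$). For the second summand, $\frac{q+1}{2}$ is odd precisely when $q\equiv 1\pmod 4$ and even precisely when $q\equiv 3 \pmod 4$; and since $q=p^m$ with $p\equiv\pm 1\pmod 4$, one has $p^m\equiv(\pm 1)^m$, so $q\equiv 1\pmod 4$ in all cases except when $p\equiv 3\pmod 4$ and $m$ is odd, where $q\equiv 3\pmod 4$.

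I would then combine these observations in the three relevant cases. If $p\equiv 1\pmod 4$, the first term is even and $q\equiv 1$ forces the second term to be odd, so the sum is odd. If $p\equiv 3\pmod 4$ with $m$ even, the first term is even (its parity is that of $m$) and again $q\equiv 1$ makes the second term odd, so the sum is odd. Finally, if $p\equiv 3\pmod 4$ with $m$ odd, the first term is odd while $q\equiv 3$ makes the second even, so the sum is once more odd. Hence the quantity is odd in every case.

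I expect no genuine obstacle: this is an elementary parity identity settled by a short finite case check. The only point needing care -- and the reason the lemma is nontrivially true rather than accidental -- is the last case, where the parities of both summands flip simultaneously (the first becomes odd, the second becomes even) so their sum remains odd; keeping the correspondence between $q\bmod 4$ and $p\bmod 4$ straight there is where a careless computation could slip. A slightly slicker alternative would be to clear denominators at the outset: working modulo $4$ one checks that the claim is equivalent to the single congruence $(p-1)m+p^m\equiv 1\pmod 4$, which then follows at once by substituting $p\equiv\pm 1\pmod 4$ and $p^m\equiv(\pm 1)^m$, collapsing the three cases into one uniform modular computation.
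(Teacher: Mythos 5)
Your proof is correct, and it takes a somewhat different route from the paper's. The paper clears the denominators at the outset, writing the quantity as $\tfrac{s}{4}$ with $s=p^2m-2pm+2q+m+2$, and then shows $s\equiv 4\pmod 8$ by substituting $p=4t+1$ or $p=4t+3$ and expanding; the case analysis is on $p\bmod 4$ only, with the dependence on the parity of $m$ absorbed into the congruence $2(4t\pm1)^m\equiv \pm 2\pmod 8$. You instead keep the two summands separate and track their parities directly: the observation $n^2\equiv n\pmod 2$ reduces the first summand to $\tfrac{p-1}{2}m\bmod 2$, and the rule $q\equiv(\pm1)^m\pmod 4$ determines the parity of $\tfrac{q+1}{2}$, leading to a clean three-case check in which the parities of the two terms flip simultaneously in the last case. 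Your version is arguably more transparent about \emph{why} the lemma holds (the simultaneous parity flip), whereas the paper's is more mechanical. Note also that your closing ``slicker alternative'' --- reducing the claim to $(p-1)m+p^m\equiv 1\pmod 4$ --- is essentially the paper's denominator-clearing strategy executed one level lower (mod $4$ instead of mod $8$), so the two approaches meet there; and to be fair, that alternative still requires splitting on the parity of $m$ when $p\equiv 3\pmod 4$, so it is compact rather than truly case-free.
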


\begin{proof}
Note that $(\frac{p-1}{2})^{2}m+\frac{q+1}{2}=\frac{p^2m-2pm+2q+m+2}{4}$. Denote $s=p^2m-2pm+2q+m+2$. We discuss the value of $s$ in two cases.
\begin{enumerate}
\item Let $p\equiv 1\pmod{4}$. Assume that $p=4t+1$ for some positive integer $t$. Then
\begin{eqnarray*}
s&=&(4t+1)^2m-2(4t+1)m+2(4t+1)^{m}+m+2\\
&=&16t^2m+2(4t+1)^{m}+2\\
&\equiv& 4\pmod{8}.
\end{eqnarray*}
\item Let $p\equiv 3\pmod{4}$. Assume that $p=4t+3$ for some nonnegative integer $t$. Then
\begin{eqnarray*}
s&=&(4t+3)^2m-2(4t+3)m+2(4t+3)^{m}+m+2\\
&=&16t^2m+16tm+2(4t+3)^{m}+4m+2\\
&\equiv& 4\pmod{8}.
\end{eqnarray*}
\end{enumerate}
Then the desired conclusion follows.
\end{proof}

\begin{lemma}\label{lem-even}
Let $q=p^m$ with $p$ odd. Then $(\frac{p-1}{2})^{2}m+\frac{q-1}{2}$ is an even integer.
\end{lemma}
\begin{proof}
The proof is similar to that of Lemma \ref{lem-odd} and is omitted.
\end{proof}

The weight distributions of $\C_{\V}^{(p)}$ are given in three cases according to different choices of $a$ as follows.
\begin{theorem}\label{th-p>2-case1}
Let $p>2$, $m>1$ and $a\in \gf(q)$ such that $x^2+x+a$ has no root in $\gf(q)$. Then $\C_{\V}^{(p)}$ is a six-weight $p$-ary linear code with parameters $[p^{2m}+1,3m+1,p^{2m-1}(p-1)-p^{m-1}]$ and the weight distribution in Table \ref{tab-2}. Its dual $\C_{\V}^{(p)\perp}$ has parameters $[p^{2m}+1,p^{2m}-3m,4]$.
\end{theorem}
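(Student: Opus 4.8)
The plan is to follow the same character-sum strategy used in the proof of Theorem \ref{th-p=2}, but now over the odd prime field $\gf(p)$ using the full machinery of quadratic Gauss sums developed in Section \ref{sec-pre}. By the trace representation of $\C_{\V}^{(p)}$, every codeword has the form $\bc(u,v,w,h)=\left(\left(\tr_{q/p}(g(x,y))+h\right)_{(x,y)\in \gf(q)^2},\tr_{q/p}(w)\right)$ with $g(x,y)=ux+vy+wx^2+wxy+way^2$. The weight of such a codeword is $p^{2m}+\epsilon - N_h(u,v,w)$, where $N_h(u,v,w)=\sharp\{(x,y): \tr_{q/p}(g(x,y))=-h\}$, the quantity $\epsilon\in\{0,1\}$ records whether the last coordinate $\tr_{q/p}(w)$ is nonzero, and the contribution of the first $p^{2m}$ coordinates is $p^{2m}-N_h$. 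So the entire problem reduces to evaluating $N_h(u,v,w)$ for each $(u,v,w,h)$.

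First I would express $N_h$ via additive characters, writing $N_h(u,v,w)=\frac{1}{p}\sum_{z\in\gf(p)}\zeta_p^{zh}\sum_{(x,y)\in\gf(q)^2}\zeta_p^{z\,\tr_{q/p}(g(x,y))}$, so that the inner sum is $\sum_{(x,y)}\chi_1(z\,g(x,y))$ for the canonical additive character $\chi_1$ of $\gf(q)$. The case $w=0$ is immediate from orthogonality: when $(u,v)\neq(0,0)$ the sum over $(x,y)$ vanishes and $N_h=p^{2m-1}$. The substantive case is $w\neq 0$, where $g$ is a genuine quadratic form in $(x,y)$. Here I would treat $g$ as a quadratic polynomial: for fixed suitable variable I complete the square, apply Lemma \ref{lem-charactersum} to evaluate the one-variable Weil sum $\sum_{c\in\gf(q)}\chi(f(c))$ in terms of the quadratic character $\eta$ and the Gauss sum $G(\eta,\chi)$, and then sum over the remaining variable. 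The key algebraic point is that the discriminant of the quadratic form $wx^2+wxy+way^2$ is governed by $a-4^{-1}$ (the same quantity appearing in Lemmas \ref{lem-eta} and \ref{lem-eta-reducible}); since $x^2+x+a$ is irreducible, Lemma \ref{lem-eta} pins down $\eta(a-4^{-1})=(-1)^{(q+1)/2}$, and Lemma \ref{quadGuasssum} evaluates $G(\eta,\chi)=(-1)^{m-1}(\sqrt{-1})^{(\frac{p-1}{2})^2 m}\sqrt{q}$. Feeding these into the two-variable sum yields $\sum_{(x,y)}\chi_1(z\,g(x,y))$ as a clean multiple of $\sqrt{q}\cdot\text{(sign)}$, and the parity lemmas \ref{lem-odd} and \ref{lem-even} are exactly what is needed to resolve the ambiguous signs coming from $(\sqrt{-1})^{(\frac{p-1}{2})^2 m}$ combined with $\eta(-1)=(-1)^{(q-1)/2}$ into a determinate integer of the form $\pm p^m$.

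Having computed $N_h(u,v,w)$, I would then organize the codewords by the triple of conditions $(u,v)=(0,0)$ or not, $w=0$ or not, and $\tr_{q/p}(w)=0$ or not, together with the value of $h\in\gf(p)$, exactly paralleling the bookkeeping in the proof of Theorem \ref{th-p=2}. Summing $N_h$ over $h\in\gf(p)$ and tracking the last-coordinate contribution gives the six nonzero weight values and their multiplicities in Table \ref{tab-2}; the total count must come out to $p^{3m+1}$, confirming the dimension $3m+1$ (equivalently, that $A_0=1$ so the $km$ bound is met with equality here). Finally, for the dual parameters I would argue exactly as in Theorem \ref{th-p=2}: the length and dimension of $\C_{\V}^{(p)\perp}$ are $p^{2m}+1$ and $p^{2m}-3m$, Theorem \ref{th-dualdistance} combined with Theorem \ref{th-dualdistanceofC} gives $d^{(p)\perp}\geq 4$, and a sphere-packing (Hamming bound) computation forces $d^{(p)\perp}\leq 4$, yielding $d^{(p)\perp}=4$.

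The main obstacle I anticipate is the careful sign determination in the odd case: unlike $p=2$ where $\Delta^2=q^2$ forces $\Delta=\pm q$ with a symmetric split, here the explicit Gauss-sum value carries a fourth-root-of-unity factor $(\sqrt{-1})^{(\frac{p-1}{2})^2 m}$ together with factors of $\eta(-1)$, $\eta(w)$, and $\eta(a-4^{-1})$, and it is a genuinely delicate parity computation to show these combine to a real integer with the correct sign in each subcase. This is precisely why Lemmas \ref{lem-eta}, \ref{lem-eta-reducible}, \ref{lem-odd}, and \ref{lem-even} were proved in advance, and the crux of the argument is assembling them correctly so that the weight distribution genuinely depends on whether $x^2+x+a$ is irreducible (in contrast to the $p=2$ case, where the remark following Theorem \ref{th-p=2} notes independence from $a$).
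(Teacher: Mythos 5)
Your proposal is correct and follows essentially the same route as the paper's own proof: the character-sum expression for $N_h$, successive applications of Lemma \ref{lem-charactersum} with the discriminant governed by $a-4^{-1}$, sign resolution via Lemmas \ref{quadGuasssum}, \ref{lem-eta} and \ref{lem-odd}, the same case-by-case bookkeeping over $(u,v)$, $w$, $\tr_{q/p}(w)$ and $h$, and the identical dual-distance argument (Theorems \ref{th-dualdistance} and \ref{th-dualdistanceofC} plus sphere-packing). One minor note: Lemma \ref{lem-even} is not actually needed for this theorem --- it enters only in the reducible case of Theorem \ref{th-p>2-case2}; for irreducible $x^2+x+a$, Lemma \ref{lem-odd} alone fixes the sign.
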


\begin{table}[ht]
\begin{center}
\caption{The weight distribution of $\C_{\V}^{(p)}$ for $p>2$ and irreducible $x^2+x+a$}\label{tab-2}
\begin{tabular}{cc} \hline
Weight  &  Multiplicity   \\ \hline
$0$          &  $1$ \\
$p^{2m}$  &  $p-1$ \\
$p^{2m-1}(p-1)$  & $p(p^{2m}-1)$ \\
$(p^{2m-1}+p^{m-1})(p-1)$   &  $p^{2m}(p^{m-1}-1)$\\
$p^{2m-1}(p-1)-p^{m-1}$  &   $p^{2m}(p^{m-1}-1)(p-1)$\\
$(p^{2m-1}+p^{m-1})(p-1)+1$   &  $p^{3m-1}(p-1)$\\
$p^{2m-1}(p-1)-p^{m-1}+1$  &   $p^{3m-1}(p-1)^2$\\
\hline
\end{tabular}
\end{center}
\end{table}

\begin{proof}
Let $\chi$ be the canonical additive character of $\gf(q)$.
Denote
$$N(u,v,w,h)=\sharp\{(x,y)\in \gf(q)^2:\tr_{q/p}(g(x,y))+h=0\}.$$ By the orthogonality relation of additive characters, we have
\begin{eqnarray}\label{eqn-2}
\nonumber pN(u,v,w,h)&=&\sum_{(x,y)\in \gf(q)^2}\sum_{z\in \gf(p)}\zeta_{p}^{z(\tr_{q/p}(g(x,y))+h)}\\
\nonumber&=&q^2+\sum_{z\in \gf(p)^*}\zeta_{p}^{zh}\sum_{(x,y)\in \gf(q)^2}\chi(zg(x,y))\\
&=&q^2+\Omega,
\end{eqnarray}
where $$\Omega:=\sum_{z\in \gf(p)^*}\zeta_{p}^{zh}\sum_{(x,y)\in \gf(q)^2}\chi(zg(x,y)).$$ Recall that $g(x,y)=uf_1(x,y)+vf_2(x,y)+wf_3(x,y))=ux+vy+wx^2+wxy+way^2$. Then we have
\begin{eqnarray}\label{eqn-3}
\nonumber\Omega&=&\sum_{z\in \gf(p)^*}\zeta_{p}^{zh}\sum_{(x,y)\in \gf(q)^2}\chi(z(ux+vy+wx^2+wxy+way^2))\\
&=&\sum_{z\in \gf(p)^*}\zeta_{p}^{zh}\sum_{y\in \gf(q)}\chi(zway^2+zvy)\sum_{x\in \gf(q)}\chi\left(zwx^2+(zu+zwy)x\right).
\end{eqnarray}
If $(u,v,w)\neq (0,0,0)$, we discuss the value of $\Omega$ in the following cases.
\begin{enumerate}
\item Assume that $w\neq 0$. Using Lemma \ref{lem-charactersum}, we get that 
\begin{eqnarray}\label{eqn-4}
\lefteqn{\sum_{x\in \gf(q)}\chi\left(zwx^2+(zu+zwy)x\right)} \nonumber \\
\nonumber&=&\chi\left(-(zu+zwy)^2(4zw)^{-1}\right)\eta(zw)G(\eta,\chi)\\
&=&\chi(-4^{-1}zwy^2-2^{-1}uzy-z(4w)^{-1}u^2)\eta(zw)G(\eta,\chi).
\end{eqnarray}
Note that $a-4^{-1}\neq 0$ as $x^2+x+a$ is irreducible over $\gf(q)$. Combining Equations (\ref{eqn-3}) and (\ref{eqn-4}) yields that 
\begin{eqnarray*}
\Omega&=&G(\eta,\chi)\sum_{z\in \gf(p)^*}\zeta_{p}^{zh}\eta(zw)\sum_{y\in \gf(q)}\chi\left((zwa-4^{-1}zw)y^2+(zv-2^{-1}uz)y-z(4w)^{-1}u^2\right)\\
&=&G(\eta,\chi)^2\sum_{z\in \gf(p)^*}\zeta_{p}^{zh}\eta(zw)\chi\left(-z(4w)^{-1}u^2-(zv-2^{-1}uz)^2(4zwa-zw)^{-1}\right)\\
& &\times \eta(zwa-4^{-1}zw)\\
&=&G(\eta,\chi)^2\sum_{z\in \gf(p)^*}\zeta_{p}^{zh}\eta(zw)^2\chi\left(-z(4w)^{-1}u^2-(zv-2^{-1}uz)^2(4zwa-zw)^{-1}\right)\eta(a-4^{-1})\\
&=&G(\eta,\chi)^2\eta(a-4^{-1})\sum_{z\in \gf(p)^*}\zeta_{p}^{zh}\chi\left(-zw^{-1}(4^{-1}u^2+(v-2^{-1}u)^2(4a-1)^{-1})\right)\\
&=&\left\{
\begin{array}{ll}
G(\eta,\chi)^2\eta(a-4^{-1})\sum_{z\in \gf(p)^*}\zeta_{p}^{zh}    &   \mbox{ for }(u,v)=(0,0) \\
G(\eta,\chi)^2\eta(a-4^{-1})\sum_{z\in \gf(p)^*}\zeta_{p}^{zh}\chi(cz)    &   \mbox{ for }(u,v)\neq(0,0) \\
\end{array} \right.\\
&=&\left\{
\begin{array}{ll}
(-1)^{(\frac{p-1}{2})^{2}m+\frac{q+1}{2}}q\sum_{z\in \gf(p)^*}\zeta_{p}^{zh}    &   \mbox{ for }(u,v)=(0,0) \\
(-1)^{(\frac{p-1}{2})^{2}m+\frac{q+1}{2}}q\sum_{z\in \gf(p)^*}\zeta_{p}^{zh}\chi(cz)    &   \mbox{ for }(u,v)\neq(0,0) \\
\end{array} \right.\\
&=&\left\{
\begin{array}{ll}
-q\sum_{z\in \gf(p)^*}\zeta_{p}^{zh}    &   \mbox{ for }(u,v)=(0,0),\\
-q\sum_{z\in \gf(p)^*}\zeta_{p}^{zh}\chi(cz)    &   \mbox{ for }(u,v)\neq(0,0),\\
\end{array} \right.
\end{eqnarray*}
where $c:=-w^{-1}(4^{-1}u^2+(v-2^{-1}u)^2(4a-1)^{-1})$ for $(u,v)\neq(0,0)$, the second equality holds due to Lemma \ref{lem-charactersum} and the last two equalities hold by Lemmas \ref{quadGuasssum}, \ref{lem-eta} and \ref{lem-odd}.  Then we further have
\begin{eqnarray*}
\Omega&=&\left\{
\begin{array}{ll}
-q(p-1)   &   \mbox{ for }(u,v)=(0,0),\ h=0 \\
q   &   \mbox{ for }(u,v)=(0,0),\ h\neq0 \\
-q\sum_{z\in \gf(p)^*}\zeta_{p}^{\tr_{q/p}(c)z}    &   \mbox{ for }(u,v)\neq(0,0),\ h=0 \\
-q\sum_{z\in \gf(p)^*}\zeta_{p}^{(h+\tr_{q/p}(c))z}    &   \mbox{ for }(u,v)\neq(0,0),\ h\neq 0 \\
\end{array} \right.\\
&=&\left\{
\begin{array}{ll}
-q(p-1)   &   \mbox{ for }(u,v)=(0,0),\ h=0 \\
q   &   \mbox{ for }(u,v)=(0,0),\ h\neq 0 \\
-q(p-1)    &   \mbox{ for }(u,v)\neq(0,0),\ h=0,\ \tr_{q/p}(c)=0 \\
q    &   \mbox{ for }(u,v)\neq(0,0),\ h=0,\ \tr_{q/p}(c)\neq0 \\
-q(p-1)  &   \mbox{ for }(u,v)\neq(0,0),\ h\neq 0,\ h+\tr_{q/p}(c)=0 \\
q    &   \mbox{ for }(u,v)\neq(0,0),\ h\neq 0,\ h+\tr_{q/p}(c)\neq0 \\
\end{array} \right.\\
&=&\left\{
\begin{array}{ll}
-q(p-1)   &  \myatop{ \mbox{ for $(u,v)=(0,0),\ h=0$ or}}{\mbox{$(u,v)\neq(0,0),\ h+\tr_{q/p}(c)=0$},}\\
q   &   \myatop{\mbox{ for $(u,v)=(0,0),\ h\neq0$ or}}{\mbox{$(u,v)\neq(0,0),\ h+\tr_{q/p}(c)\neq0$},}\\
\end{array} \right.
\end{eqnarray*}
when $(u,v,w,h)$ runs through $\gf(q)\times \gf(q)\times \gf(q)^*\times \gf(p)$.
\item Assume that $w=0$ and $(u,v)\neq (0,0)$. Then Equation (\ref{eqn-3}) implies
\begin{eqnarray*}
\Omega&=&\sum_{z\in \gf(p)^*}\zeta_{p}^{zh}\sum_{(x,y)\in \gf(q)^2}\chi(z(ux+vy))\\
&=&\sum_{z\in \gf(p)^*}\zeta_{p}^{zh}\sum_{x\in \gf(q)^2}\chi(zux)\sum_{x\in \gf(q)^2}\chi(zvy)\\
&=&0
\end{eqnarray*}
as $(u,v)\neq (0,0)$.
\end{enumerate}
By Equation (\ref{eqn-2}) and the discussions above, we deduce that
\begin{eqnarray*}
N(u,v,w,h)=\left\{
\begin{array}{ll}
p^{2m} & \mbox{ for }(u,v,w,h)=(0,0,0,0),\\
0 & \mbox{ for }(u,v,w)=(0,0,0,0),\ h\neq 0,\\
p^{2m-1} & \mbox{ for $(u,v)\neq (0,0)$ and $w=0$},\\
p^{2m-1}-p^{m-1}(p-1)   &  \myatop{ \mbox{ for $(u,v)=(0,0),\ h=0,\ w\neq 0$ or}}{\mbox{$(u,v)\neq(0,0),\ h+\tr_{q/p}(c)=0,\ w\neq 0$},}\\
p^{2m-1}+p^{m-1}  &   \myatop{\mbox{ for $(u,v)=(0,0),\ h\neq0,\ w\neq 0$ or}}{\mbox{$(u,v)\neq(0,0),\ h+\tr_{q/p}(c)\neq0,\ w\neq 0$},}
\end{array} \right.
\end{eqnarray*}
where $c\in \gf(q)^*$  is defined as above.

For any codeword $$\bc(u,v,w,h):=\left(\left(\tr_{p^m/p}(g(x,y))+h\right)_{(x,y)\in \gf(p^m)^{2}},\tr_{p^m/p}(w)\right)\in \C_{\V}^{(p)},$$ by the discussions above  
we deduce that
\begin{eqnarray*}
\lefteqn{\wt(\bc(u,v,w,h)) } \\
&=&\left\{
\begin{array}{ll}
0 & \mbox{ for }(u,v,w,h)=(0,0,0,0) \\
p^{2m} & \mbox{ for }(u,v,w)=(0,0,0,0),\ h\neq 0 \\
p^{2m-1}(p-1) & \mbox{ for $(u,v)\neq (0,0)$ and $w=0$} \\
(p^{2m-1}+p^{m-1})(p-1)   &  \myatop{ \mbox{ for $(u,v)=(0,0),\ h=0,\ w\neq 0,\ \tr_{p^m/p}(w)=0$ or}}{\mbox{$(u,v)\neq(0,0),\ h+\tr_{q/p}(c)=0,\ w\neq 0,\ \tr_{p^m/p}(w)=0$}} \\
p^{2m-1}(p-1)-p^{m-1}  &   \myatop{\mbox{ for $(u,v)=(0,0),\ h\neq0,\ w\neq 0,\ \tr_{p^m/p}(w)=0$ or}}{\mbox{$(u,v)\neq(0,0),\ h+\tr_{q/p}(c)\neq0,\ w\neq 0,\ \tr_{p^m/p}(w)=0$}} \\
(p^{2m-1}+p^{m-1})(p-1)+1   &  \myatop{ \mbox{ for $(u,v)=(0,0),\ h=0,\ \tr_{p^m/p}(w)\neq0$ or}}{\mbox{$(u,v)\neq(0,0),\ h+\tr_{q/p}(c)=0,\ \tr_{p^m/p}(w)\neq0$}} \\
p^{2m-1}(p-1)-p^{m-1}+1  &   \myatop{\mbox{ for $(u,v)=(0,0),\ h\neq0,\ \tr_{p^m/p}(w)\neq0$ or}}{\mbox{$(u,v)\neq(0,0),\ h+\tr_{q/p}(c)\neq0,\ \tr_{p^m/p}(w)\neq0$}} \\
\end{array} \right.\\
&=&\left\{
\begin{array}{ll}
0 & \mbox{ with 1 time},\\
p^{2m} & \mbox{ with $p-1$ times},\\
p^{2m-1}(p-1) & \mbox{ with $p(p^{2m}-1)$ times},\\
(p^{2m-1}+p^{m-1})(p-1)   &  \mbox{ with $p^{2m}(p^{m-1}-1)$ times,}\\
p^{2m-1}(p-1)-p^{m-1}  &   \mbox{ with $p^{2m}(p^{m-1}-1)(p-1)$ times,}\\
(p^{2m-1}+p^{m-1})(p-1)+1   &  \mbox{ with $p^{3m-1}(p-1)$ times,}\\
p^{2m-1}(p-1)-p^{m-1}+1  &   \mbox{ with $p^{3m-1}(p-1)^2$ times,}\\
\end{array} \right.
\end{eqnarray*}
when $(u,v,w,h)$ runs through $\gf(q)\times \gf(q)\times \gf(q)\times \gf(p)$. Note that the dimension of $\C_{\V}^{(p)}$ is $3m+1$ as $A_0=1$.

Note that $\C_{\V}^{(p)\perp}$ is of length $p^{2m}+1$ and dimension $p^{2m}-3m$. It follows from Theorems \ref{th-dualdistance} and \ref{th-dualdistanceofC} that the minimal distance $d^{(p)\perp}$ of $\C_{\V}^{(p)\perp}$ satisfies $d^{(p)\perp}\geq 4$. By the sphere-packing bound, we have
$$p^{p^{2m}+1}\geq p^{p^{2m}-3m}\left(\sum_{i=0}^{^{\left\lfloor\frac{d^{(p)\perp}-1}{2}\right\rfloor}}(p-1)^i\binom{p^{2m}+1}{i}\right),$$ 
which implies that $d^{(p)\perp}\leq 4$, where $\lfloor x\rfloor$ is the floor function. Thus $d^{(p)\perp}=4$. 
The proof is now completed.
\end{proof}

\begin{theorem}\label{th-p>2-case2}
Let $p>2$, $m>1$ and $a\in \gf(q)$ such that $x^2+x+a$ is reducible over $\gf(q)$ and $a\neq 4^{-1}$. Then $\C_{\V}^{(p)}$ is a six-weight $p$-ary linear code with parameters $[p^{2m}+1,3m+1,(p^{2m-1}-p^{m-1})(p-1)]$ and the weight distribution in Table \ref{tab-3}.
\end{theorem}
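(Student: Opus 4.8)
The plan is to follow the proof of Theorem \ref{th-p>2-case1} almost verbatim, because the only geometric hypothesis actually used there was $a - 4^{-1} \neq 0$, and that still holds here since $a \neq 4^{-1}$ is assumed explicitly. First I would introduce the same counting function $N(u,v,w,h) = \sharp\{(x,y) \in \gf(q)^2 : \tr_{q/p}(g(x,y)) + h = 0\}$ and, via the orthogonality relation, the identity $p\,N(u,v,w,h) = q^2 + \Omega$, where $\Omega = \sum_{z \in \gf(p)^*} \zeta_p^{zh} \sum_{(x,y) \in \gf(q)^2} \chi(zg(x,y))$ and $\chi$ is the canonical additive character of $\gf(q)$. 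The case $w = 0$ with $(u,v) \neq (0,0)$ yields $\Omega = 0$ exactly as before, since it involves only the linear form $ux + vy$ and never sees $a$; the same holds for the trivial cases $(u,v,w) = (0,0,0)$.

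The heart of the argument is the case $w \neq 0$. Here I would apply Lemma \ref{lem-charactersum} twice, precisely as in Theorem \ref{th-p>2-case1}, reducing $\Omega$ to the factor $G(\eta,\chi)^2 \eta(a - 4^{-1})$ times a sum over $z \in \gf(p)^*$. The single place where the reducible hypothesis enters is the evaluation of this factor. By Lemma \ref{quadGuasssum} we have $G(\eta,\chi)^2 = (-1)^{(\frac{p-1}{2})^2 m} q$; now, instead of Lemma \ref{lem-eta}, I would invoke Lemma \ref{lem-eta-reducible} to obtain $\eta(a - 4^{-1}) = (-1)^{(q-1)/2}$, and instead of Lemma \ref{lem-odd}, I would invoke Lemma \ref{lem-even} to conclude that $(\frac{p-1}{2})^2 m + \frac{q-1}{2}$ is even. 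Hence $G(\eta,\chi)^2 \eta(a - 4^{-1}) = +q$, whereas in the irreducible case this factor equalled $-q$. This sign flip is the only substantive change in the entire computation.

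This flip then propagates mechanically. Where Theorem \ref{th-p>2-case1} obtained $\Omega = -q(p-1)$ under the conditions ``$(u,v)=(0,0),\,h=0$ or $(u,v)\neq(0,0),\,h + \tr_{q/p}(c) = 0$'' and $\Omega = q$ otherwise, I would now get $\Omega = q(p-1)$ and $\Omega = -q$ respectively. Substituting into $N = (q^2 + \Omega)/p$ interchanges the two nontrivial counts: $p^{2m-1} - p^{m-1}(p-1)$ becomes $p^{2m-1} + p^{m-1}(p-1)$, and $p^{2m-1} + p^{m-1}$ becomes $p^{2m-1} - p^{m-1}$. Passing to weights through $\wt(\bc(u,v,w,h)) = (p^{2m} - N(u,v,w,h)) + \delta$, where $\delta = 1$ if $\tr_{q/p}(w) \neq 0$ and $\delta = 0$ otherwise, this swaps the two nontrivial weight pairs, producing $(p^{2m-1} - p^{m-1})(p-1)$ and $p^{2m-1}(p-1) + p^{m-1}$, together with their $+1$ variants arising when $\tr_{q/p}(w) \neq 0$.

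Finally, I would note that the multiplicities are untouched: the partition of the tuples $(u,v,w,h)$ according to the conditions on $h$, $\tr_{q/p}(c)$, and $\tr_{q/p}(w)$ is identical to the irreducible case, so every counting argument there transfers verbatim, merely reattached to the swapped weights; likewise the weights $0$ (multiplicity $1$), $p^{2m}$ (multiplicity $p-1$), and $p^{2m-1}(p-1)$ (multiplicity $p(p^{2m}-1)$, from $w=0$) are unchanged. Collecting these gives Table \ref{tab-3}, and since $(p^{2m-1} - p^{m-1})(p-1)$ is readily checked to be the smallest of the six nonzero weights, the minimum distance is as claimed. The only point demanding care — and hence the sole obstacle, such as it is — is tracking the sign through the two applications of Lemma \ref{lem-charactersum} and confirming via Lemma \ref{lem-even} that the net sign is $+1$; everything else is inherited directly from Theorem \ref{th-p>2-case1}.
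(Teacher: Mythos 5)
Your proposal is correct and takes essentially the same approach as the paper: the paper's own proof is a one-line remark that the argument is ``almost exactly the same'' as that of Theorem \ref{th-p>2-case1}, except for using Lemmas \ref{lem-eta-reducible} and \ref{lem-even} in place of Lemmas \ref{lem-eta} and \ref{lem-odd} --- precisely the substitution you make. Your explicit tracking of the resulting sign flip in $G(\eta,\chi)^2\eta(a-4^{-1})$ from $-q$ to $+q$, and of how it propagates through $N(u,v,w,h)$ to the weights while leaving the multiplicities attached to each condition unchanged, correctly reproduces Table \ref{tab-3}.
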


\begin{table}[ht]
\begin{center}
\caption{The weight distribution of $\C_{\V}^{(p)}$ for $p>2$ and reducible $x^2+x+a$}\label{tab-3}
\begin{tabular}{cc} \hline
Weight  &  Multiplicity   \\ \hline
$0$          &  $1$ \\
$p^{2m}$  &  $p-1$ \\
$p^{2m-1}(p-1)$  & $p(p^{2m}-1)$ \\
$(p^{2m-1}-p^{m-1})(p-1)$   &  $p^{2m}(p^{m-1}-1)$\\
$p^{2m-1}(p-1)+p^{m-1}$  &   $p^{2m}(p^{m-1}-1)(p-1)$\\
$(p^{2m-1}-p^{m-1})(p-1)+1$   &  $p^{3m-1}(p-1)$\\
$p^{2m-1}(p-1)+p^{m-1}+1$  &   $p^{3m-1}(p-1)^2$\\
\hline
\end{tabular}
\end{center}
\end{table}
\begin{proof}
The proof of this theorem and that of Theorem \ref{th-p>2-case1} are almost exactly the same except for using Lemmas \ref{lem-eta-reducible} and \ref{lem-even} instead of Lemmas \ref{lem-eta} and \ref{lem-odd}. We omit the details of the proof here.
\end{proof}

\begin{lemma}\label{lem-equalities}
Let $q=p^m$ with $p$ an odd prime. Then the following statements hold.
\begin{enumerate}
\item \begin{eqnarray*}
\lefteqn{\sharp \{w\in \gf(q)^*:\eta(w)=1\mbox{ and }\tr_{q/p}(w)=0\} } \\
&=&\left\{
\begin{array}{ll}
\frac{p^{m-1}-1-(p-1)p^{\frac{m-2}{2}}(\sqrt{-1})^{\frac{(p-1)m}{2}}}{2} & \mbox{ for even $m$,} \\
\frac{p^{m-1}-1}{2} & \mbox{ for odd $m$.} 
\end{array}\right.
\end{eqnarray*}
\item \begin{eqnarray*}
\lefteqn{ \sharp \{w\in \gf(q)^*:\eta(w)=1\mbox{ and }\tr_{q/p}(w) \neq 0\} }\\
&=&\left\{
\begin{array}{ll}
\frac{(p-1)(p^{m-1}+p^{\frac{m-2}{2}}(\sqrt{-1})^{\frac{(p-1)m}{2}})}{2} & \mbox{ for even $m$},\\
\frac{p^{m-1}(p-1)}{2} & \mbox{ for odd $m$}.
\end{array}\right.
\end{eqnarray*}
\item \begin{eqnarray*}
\lefteqn{ \sharp \{w\in \gf(q)^*:\eta(w)=-1\mbox{ and }\tr_{q/p}(w)=0\} } \\
&=&\left\{
\begin{array}{ll}
\frac{p^{m-1}-1+(p-1)p^{\frac{m-2}{2}}(\sqrt{-1})^{\frac{(p-1)m}{2}}}{2} & \mbox{ for even $m$},\\
\frac{p^{m-1}-1}{2} & \mbox{ for odd $m$}.
\end{array}\right.
\end{eqnarray*}
\item \begin{eqnarray*}
\lefteqn{ \sharp \{w\in \gf(q)^*:\eta(w)=-1\mbox{ and }\tr_{q/p}(w)\neq0\} } \\
&=&\left\{
\begin{array}{ll}
\frac{(p-1)(p^{m-1}-p^{\frac{m-2}{2}}(\sqrt{-1})^{\frac{(p-1)m}{2}})}{2} & \mbox{ for even $m$},\\
\frac{p^{m-1}(p-1)}{2} & \mbox{ for odd $m$}.
\end{array}\right.
\end{eqnarray*}
\end{enumerate}
\end{lemma}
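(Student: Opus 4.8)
The plan is to count each of the four sets by orthogonality-based character sums, carry out item (1) in full, and obtain the remaining items by sign changes and complementation. Throughout let $\chi$ be the canonical additive character of $\gf(q)$ and $\eta$ the quadratic multiplicative character. The first step is to encode both conditions: on $w\in\gf(q)^*$, the event $\eta(w)=1$ is detected by $\frac{1}{2}(1+\eta(w))$ and $\eta(w)=-1$ by $\frac{1}{2}(1-\eta(w))$, while $\tr_{q/p}(w)=0$ is detected by $\frac{1}{p}\sum_{z\in\gf(p)}\chi(zw)$. Multiplying the two indicators and summing over $w$ gives, for item (1),
\[
\sharp\{w\in\gf(q)^*:\eta(w)=1,\ \tr_{q/p}(w)=0\}
=\frac{1}{2p}\sum_{w\in\gf(q)^*}(1+\eta(w))\sum_{z\in\gf(p)}\chi(zw).
\]
Separating the term $z=0$ and expanding produces four sums: $\sum_{w}1=q-1$; $\sum_{w}\eta(w)=0$ by orthogonality of the nontrivial multiplicative character; $\sum_{z\in\gf(p)^*}\sum_{w}\chi(zw)=-(p-1)$, since each inner sum equals $-1$ by orthogonality of the nontrivial additive character $\chi_z$; and the crucial cross term $D:=\sum_{z\in\gf(p)^*}\sum_{w\in\gf(q)^*}\eta(w)\chi(zw)$.

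The second step evaluates $D$. For fixed $z\in\gf(p)^*\subseteq\gf(q)^*$, the substitution $w\mapsto z^{-1}w$ together with $\eta(z^{-1})=\eta(z)$ yields $\sum_{w}\eta(w)\chi(zw)=\eta(z)\,G(\eta,\chi)$, hence $D=G(\eta,\chi)\sum_{z\in\gf(p)^*}\eta(z)$. The heart of the argument is to determine $\eta$ on the prime field: writing $\eta(z)=z^{(q-1)/2}=\bigl(z^{(p-1)/2}\bigr)^{1+p+\cdots+p^{m-1}}$ and noting that $z^{(p-1)/2}=\pm1$ while $1+p+\cdots+p^{m-1}\equiv m\pmod 2$, we find $\eta(z)=1$ for every $z\in\gf(p)^*$ when $m$ is even, whereas $\eta(z)$ coincides with the quadratic character of $\gf(p)$ when $m$ is odd. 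Therefore $\sum_{z\in\gf(p)^*}\eta(z)$ equals $p-1$ for even $m$ and $0$ for odd $m$, giving $D=(p-1)G(\eta,\chi)$ and $D=0$ respectively.

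The final step substitutes the Gauss sum from Lemma \ref{quadGuasssum}. For odd $m$, $D=0$ and item (1) reduces to $\frac{p^{m-1}-1}{2}$. For even $m$ we have $(-1)^{m-1}=-1$ and $\sqrt{q}=p^{m/2}$, so $D=-(p-1)(\sqrt{-1})^{(\frac{p-1}{2})^2m}p^{m/2}$; assembling the four sums gives item (1) $=\frac{p^{m-1}-1}{2}+\frac{D}{2p}$, which matches the claimed value once the exponent of $\sqrt{-1}$ is rewritten. This last simplification is the main obstacle: one must show $(\sqrt{-1})^{(\frac{p-1}{2})^2m}=(\sqrt{-1})^{(p-1)m/2}$, which holds because the difference of exponents is $m\cdot\frac{p-1}{2}\bigl(\frac{p-1}{2}-1\bigr)$, a multiple of $4$ since $\frac{p-1}{2}(\frac{p-1}{2}-1)$ is even and $m$ is even. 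With item (1) settled, items (2)--(4) require no new character-sum work: item (3) is obtained by replacing $1+\eta(w)$ with $1-\eta(w)$, which merely flips the sign of the $D$-contribution, while items (2) and (4) follow by complementation, using that item (1)$+$item (2) and item (3)$+$item (4) each equal $\frac{q-1}{2}$, the number of nonzero squares and of nonsquares respectively.
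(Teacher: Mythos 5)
Your proposal is correct and follows essentially the same route as the paper: encode the trace condition by additive-character orthogonality, reduce the square-detection to the quantity $G(\eta,\chi)\sum_{z\in \gf(p)^*}\eta(z)$, split on the parity of $m$ (where $\eta$ restricted to $\gf(p)^*$ is trivial for even $m$ and the quadratic character of $\gf(p)$ for odd $m$), and finish with Lemma \ref{quadGuasssum}. The only cosmetic difference is that you detect squares with the indicator $\frac{1}{2}(1+\eta(w))$ and evaluate the resulting twisted Gauss sums by the substitution $w\mapsto z^{-1}w$, whereas the paper sums over the subgroup of squares and invokes the quadratic Weil-sum lemma (Lemma \ref{lem-charactersum}); your write-up also makes explicit two points the paper leaves implicit, namely the congruence $(\sqrt{-1})^{(\frac{p-1}{2})^2 m}=(\sqrt{-1})^{\frac{(p-1)m}{2}}$ for even $m$ and the derivation of items (2)--(4) by complementation from $\frac{q-1}{2}$.
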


\begin{proof}
We only prove the first equality as the others follow directly.
Let $\chi$ be the canonical additive character and $\alpha$ a primitive element of $\gf(q)$. Let $C_0$ be the cyclic group generated by $\alpha^2$.
Denote $N(w)=\sharp \{w\in \gf(q)^*:\eta(w)=1\mbox{ and }\tr_{q/p}(w)=0\}$. By the orthogonality  relation of additive characters and Lemmas \ref{quadGuasssum} and \ref{lem-charactersum}, we obtain that 
\begin{eqnarray*}
N(w)&=&\frac{1}{p}\sum_{z\in \gf(p)}\sum_{w\in C_0}\chi(zw)\\
&=&\frac{1}{2p}\sum_{z\in \gf(p)}\sum_{w\in \gf(q)^*}\chi(zw^2)\\
&=&\frac{q-p}{2p}+\frac{1}{2p}\sum_{z\in \gf(p)^*}\sum_{w\in \gf(q)}\chi(zw^2)\\
&=&\frac{q-p}{2p}+\frac{1}{2p}G(\eta,\chi)\sum_{z\in \gf(p)^*}\eta(z)\\
&=&\left\{
\begin{array}{ll}
\frac{q-p}{2p}+\frac{p-1}{2p}G(\eta,\chi) & \mbox{ for even $m$} \\
\frac{q-p}{2p} & \mbox{ for odd $m$} 
\end{array}\right.\\
&=&\left\{
\begin{array}{ll}
\frac{p^{m-1}-1-(p-1)p^{\frac{m-2}{2}}(\sqrt{-1})^{\frac{(p-1)m}{2}}}{2} & \mbox{ for even $m$},\\
\frac{p^{m-1}-1}{2} & \mbox{ for odd $m$}, 
\end{array}\right.
\end{eqnarray*}
where the fifth equality comes from the orthogonality of the multiplicative characters. 
\end{proof}

\begin{theorem}\label{th-p>2-case3}
Let $p>2$, $m>1$ and $a=4^{-1}$. Then $\C_{\V}^{(p)}$ is a $p$-ary $[p^{2m}+1,3m+1]$  linear code with weight distributions in Tables \ref{tab-4} and \ref{tab-5} for even $m$ and  odd $m$, respectively.
\end{theorem}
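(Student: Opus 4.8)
The plan is to mirror the character-sum computation of Theorem \ref{th-p>2-case1}, exploiting the degeneracy produced by $a=4^{-1}$. As before, I set
$$N(u,v,w,h)=\sharp\{(x,y)\in\gf(q)^2:\tr_{q/p}(g(x,y))+h=0\},$$
so that $pN(u,v,w,h)=q^2+\Omega$ with $\Omega=\sum_{z\in\gf(p)^*}\zeta_p^{zh}\sum_{(x,y)\in\gf(q)^2}\chi(zg(x,y))$, where $\chi$ is the canonical additive character of $\gf(q)$. The crucial point is that when $a=4^{-1}$ the quadratic part of $g$ collapses to a single square, $g(x,y)=ux+vy+w(x+2^{-1}y)^2$; equivalently, in the evaluation of the inner $x$-sum via Lemma \ref{lem-charactersum} the coefficient $w(a-4^{-1})$ of $y^2$ vanishes, so the residual $y$-sum is \emph{linear} rather than quadratic. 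Hence for $w\neq 0$ this $y$-sum is zero unless $v=2^{-1}u$, and I would first record that $\Omega=0$ (giving the generic weight $p^{2m-1}(p-1)$) in the cases $w=0,(u,v)\neq(0,0)$ and $w\neq0,\,v\neq 2^{-1}u$, exactly as in the non-degenerate theorems, while $(u,v,w)=(0,0,0)$ gives the trivial weights $0$ and $p^{2m}$.

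When $w\neq0$ and $v=2^{-1}u$, carrying out the $x$-sum via Lemma \ref{lem-charactersum} and the now-linear $y$-sum reduces $\Omega$ to
$$\Omega=qG(\eta,\chi)\,\eta(w)\sum_{z\in\gf(p)^*}\eta(z)\zeta_p^{z(h-t)},\qquad t:=\tr_{q/p}\!\big(u^2(4w)^{-1}\big)\in\gf(p).$$
The even/odd dichotomy of the theorem comes entirely from the restriction of the quadratic character $\eta$ of $\gf(q)$ to $\gf(p)^*$: since $(q-1)/(p-1)=1+p+\cdots+p^{m-1}\equiv m\pmod 2$, one has $\eta(z)=\eta_p(z)^m$ for $z\in\gf(p)^*$, where $\eta_p$ is the quadratic character of $\gf(p)$. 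For even $m$ this gives $\eta(z)=1$, so the inner sum takes the elementary value $p-1$ (when $h=t$) or $-1$ (when $h\neq t$); combined with Lemma \ref{quadGuasssum}, which makes $G(\eta,\chi)=\pm p^{m/2}$ real, one obtains $\Omega\in\{\pm p^{3m/2}(p-1),\ \pm p^{3m/2}\}$. For odd $m$ the inner sum becomes the quadratic Gauss sum over $\gf(p)$: it is $0$ when $h=t$ and $\eta_p(h-t)G_p$ when $h\neq t$, where $G_p$ is the $\gf(p)$ quadratic Gauss sum. Here the product $G(\eta,\chi)G_p$ is real and equal to $\pm p^{(m+1)/2}$ in both residue classes $p\equiv1,3\pmod4$ (imaginary times imaginary, or real times real), so that $\Omega\in\{0,\ \pm p^{(3m+1)/2}\}$.

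From each value of $\Omega$ I would read off $N=p^{2m-1}+\Omega/p$ and hence the weight of the first $q^2$ coordinates, $p^{2m}-N$; the final coordinate $\tr_{q/p}(w)$ adds $1$ to the weight precisely when $\tr_{q/p}(w)\neq 0$. Collecting the distinct totals produces the weight values in Tables \ref{tab-4} and \ref{tab-5}, and the dimension $3m+1$ follows from $A_0=1$ as in the earlier proofs.

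The main obstacle, and the bulk of the work, is the frequency count. Each weight is governed by a conjunction of conditions on $w$ (its value $\eta(w)=\pm1$ and whether $\tr_{q/p}(w)=0$), on $h$ relative to $t$ (for odd $m$, also the sign $\eta_p(h-t)$), with $v=2^{-1}u$ forced and $u$ free. The joint distribution of $\eta(w)$ and $\tr_{q/p}(w)$ over $\gf(q)^*$ is exactly what Lemma \ref{lem-equalities} supplies, and this is where the correction terms $(p-1)p^{(m-2)/2}(\sqrt{-1})^{(p-1)m/2}$ for even $m$ enter the multiplicities; for each admissible $w$ the count over $h$ is elementary (one value with $h=t$, and $(p-1)/2$ values in each quadratic class of $h-t$ for odd $m$). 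I would organize the computation as a sum over the four $\eta(w)$/$\tr_{q/p}(w)$ classes of Lemma \ref{lem-equalities}, each weighted by the corresponding number of $(u,h)$ pairs, and I would verify the outcome by checking that the multiplicities sum to $p^{3m+1}$, the total number of codewords.
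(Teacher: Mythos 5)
Your proposal is correct and follows essentially the same route as the paper's proof: the same character-sum setup $pN=q^2+\Omega$, the same key observation that $a=4^{-1}$ kills the $y^2$ coefficient so the $y$-sum is linear and forces $v=2^{-1}u$, the same reduction of $\Omega$ to $qG(\eta,\chi)\eta(w)\sum_{z}\eta(z)\zeta_p^{z(h-t)}$ with the even/odd $m$ dichotomy coming from $\eta|_{\gf(p)^*}=\eta_p^m$, and the same use of Lemma \ref{lem-equalities} to count the $(\eta(w),\tr_{q/p}(w))$ classes for the multiplicities. The only difference is cosmetic: you phrase the degeneracy as $g(x,y)=ux+vy+w(x+2^{-1}y)^2$ and add a sanity check that the multiplicities sum to $p^{3m+1}$, neither of which changes the argument.
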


\begin{table}[ht]
\begin{center}
\caption{The weight distribution of $\C_{\V}^{(p)}$ for $p>2$, even $m$ and $a=4^{-1}$}\label{tab-4}
\begin{tabular}{cc} \hline
Weight  &  Multiplicity   \\ \hline
0 & 1\\
$p^{2m}$ & $p-1$\\
$p^{2m-1}(p-1)$ & $(p^m-1)(p^{2m}+p)$\\
$p^{2m-1}(p-1)+1$ & $p^{2m}(p^m-1)(p-1)$\\
$p^{m-1}(p-1)\left(p^m+p^{\frac{m}{2}}(\sqrt{-1})^{\frac{p-1}{2}m}\right)$ & $\frac{p^{2m-1}-p^m-(p-1)p^{\frac{3m-2}{2}}(\sqrt{-1})^{\frac{(p-1)m}{2}}}{2}$\\
$p^{m-1}(p-1)\left(p^m+p^{\frac{m}{2}}(\sqrt{-1})^{\frac{p-1}{2}m}\right)+1$ & $\frac{(p-1)(p^{2m-1}+p^{\frac{3m-2}{2}}(\sqrt{-1})^{\frac{(p-1)m}{2}})}{2}$\\
$p^{m-1}\left(p^m(p-1)-p^{\frac{m}{2}}(\sqrt{-1})^{\frac{p-1}{2}m}\right)$ & $\frac{(p-1)(p^{2m-1}-p^m-(p-1)p^{\frac{3m-2}{2}}(\sqrt{-1})^{\frac{(p-1)m}{2}})}{2}$\\
$p^{m-1}\left(p^m(p-1)-p^{\frac{m}{2}}(\sqrt{-1})^{\frac{p-1}{2}m}\right)+1$ & $\frac{(p-1)^2(p^{2m-1}+p^{\frac{3m-2}{2}}(\sqrt{-1})^{\frac{(p-1)m}{2}})}{2}$\\
$p^{m-1}(p-1)\left(p^m-p^{\frac{m}{2}}(\sqrt{-1})^{\frac{p-1}{2}m}\right)$ & $\frac{p^{2m-1}-p^m+(p-1)p^{\frac{3m-2}{2}}(\sqrt{-1})^{\frac{(p-1)m}{2}}}{2}$\\
$p^{m-1}(p-1)\left(p^m-p^{\frac{m}{2}}(\sqrt{-1})^{\frac{p-1}{2}m}\right)+1$ & $\frac{(p-1)(p^{2m-1}-p^{\frac{3m-2}{2}}(\sqrt{-1})^{\frac{(p-1)m}{2}})}{2}$\\
$p^{m-1}\left(p^m(p-1)+p^{\frac{m}{2}}(\sqrt{-1})^{\frac{p-1}{2}m}\right)$ & $\frac{(p-1)(p^{2m-1}-p^m+(p-1)p^{\frac{3m-2}{2}}(\sqrt{-1})^{\frac{(p-1)m}{2}})}{2}$\\
$p^{m-1}\left(p^m(p-1)+p^{\frac{m}{2}}(\sqrt{-1})^{\frac{p-1}{2}m}\right)+1$ & $\frac{(p-1)^2(p^{2m-1}-p^{\frac{3m-2}{2}}(\sqrt{-1})^{\frac{(p-1)m}{2}})}{2}$\\
\hline
\end{tabular}
\end{center}
\end{table}

\begin{table}[ht]
\begin{center}
\caption{The weight distribution of $\C_{\V}^{(p)}$ for $p>2$, odd $m$ and $a=4^{-1}$}\label{tab-5}
\begin{tabular}{cc} \hline
Weight  &  Multiplicity   \\ \hline
0 & 1\\
$p^{2m}$ & $p-1$\\
$p^{2m-1}(p-1)$ & $p^m(p^{m-1}-1)(p^{m+1}-p+1)+p(p^{2m}-1)$\\
$p^{2m-1}(p-1)+1$ & $p^{2m-1}(p-1)(p^{m+1}-p+1)$\\
$p^{\frac{3m-1}{2}}\left(p^{\frac{m-1}{2}}(p-1)-(\sqrt{-1})^\frac{(p-1)(m+1)}{2}\right)$ & $\frac{p^m(p^{m-1}-1)(p-1)}{2}$\\
$p^{\frac{3m-1}{2}}\left(p^{\frac{m-1}{2}}(p-1)-(\sqrt{-1})^\frac{(p-1)(m+1)}{2}\right)+1$ & $\frac{p^{2m-1}(p-1)^2}{2}$\\
$p^{\frac{3m-1}{2}}\left(p^{\frac{m-1}{2}}(p-1)+(\sqrt{-1})^\frac{(p-1)(m+1)}{2}\right)$ & $\frac{p^m(p^{m-1}-1)(p-1)}{2}$\\
$p^{\frac{3m-1}{2}}\left(p^{\frac{m-1}{2}}(p-1)+(\sqrt{-1})^\frac{(p-1)(m+1)}{2}\right)+1$ & $\frac{p^{2m-1}(p-1)^2}{2}$\\
\hline
\end{tabular}
\end{center}
\end{table}

\begin{proof}
We follow the notation in the proof of Theorem \ref{th-p>2-case1}, where 
$$N(u,v,w,h)=\sharp\{(x,y)\in \gf(q)^2:\tr_{q/p}(g(x,y))+h=0\}.$$
Let $a=4^{-1}$.

If $w\neq 0$, by Equations (\ref{eqn-2}), (\ref{eqn-3}) and (\ref{eqn-4}) and Lemma \ref{lem-charactersum} we have
$$pN(u,v,w,h)=q^2+\Omega,$$ 
where 
\begin{eqnarray*}
\Omega&=&G(\eta,\chi)\sum_{z\in \gf(p)^*}\zeta_{p}^{zh}\eta(zw)\sum_{y\in \gf(q)}\chi\left((zwa-4^{-1}zw)y^2+(zv-2^{-1}uz)y-z(4w)^{-1}u^2\right)\\
&=&G(\eta,\chi)\sum_{z\in \gf(p)^*}\zeta_{p}^{zh}\eta(zw)\chi(-z(4w)^{-1}u^2)\sum_{y\in \gf(q)}\chi\left((zv-2^{-1}uz)y\right)\\
&=&\left\{
\begin{array}{ll}
0 & \mbox{ for }v\neq 2^{-1}u \\
G(\eta,\chi)q\sum\limits_{z\in \gf(p)^*}\zeta_{p}^{zh}\eta(zw)\chi(-z(4w)^{-1}u^2) & \mbox{ for }v=2^{-1}u \\
\end{array}\right.\\
&=&\left\{
\begin{array}{ll}
0 & \mbox{ for }v\neq 2^{-1}u,\\
G(\eta,\chi)q\eta(w)\sum\limits_{z\in \gf(p)^*}\zeta_{p}^{(h-\tr_{q/p}(w^{-1}v^2))z}\eta(z) & \mbox{ for }v=2^{-1}u. \\
\end{array}\right.
\end{eqnarray*}
When $m$ is even, we have $\eta(z)=1$ for $z\in \gf(p)^*$. When $m$ is odd, $\eta(z)=\eta'(z)$ for $z\in \gf(p)^*$, where $\eta'$ denotes the quadratic multiplicative character of $\gf(p)$. Let $\chi'$ denote the canonical additive character of $\gf(p)$. Then we have the following.
\begin{enumerate}
\item When $m$ is even, we deduce that
\begin{eqnarray*}
\Omega&=&\left\{
\begin{array}{ll}
0 & \mbox{ for }v\neq 2^{-1}u,\\
G(\eta,\chi)q(p-1) & \mbox{ for }v=2^{-1}u,\ h=\tr_{q/p}(w^{-1}v^2),\ \eta(w)=1,\\
-G(\eta,\chi)q & \mbox{ for }v=2^{-1}u,\ h\neq\tr_{q/p}(w^{-1}v^2),\ \eta(w)=1,\\
-G(\eta,\chi)q(p-1) & \mbox{ for }v=2^{-1}u,\ h=\tr_{q/p}(w^{-1}v^2),\ \eta(w)=-1,\\
G(\eta,\chi)q & \mbox{ for }v=2^{-1}u,\ h\neq\tr_{q/p}(w^{-1}v^2),\ \eta(w)=-1.
\end{array}\right.
\end{eqnarray*}
\item When $m$ is odd, we deduce that
\begin{eqnarray*}
\Omega&=&\left\{
\begin{array}{ll}
0 & \mbox{ for }v\neq 2^{-1}u \\
G(\eta,\chi)q\eta(w)\sum\limits_{z\in \gf(p)^*}\chi'((h-\tr_{q/p}(w^{-1}v^2))z)\eta'(z) & \mbox{ for }v=2^{-1}u \\
\end{array}\right.\\
&=&\left\{
\begin{array}{ll}
0 & \mbox{ for }v\neq 2^{-1}u \\
0 & \mbox{ for }v=2^{-1}u,\ h=\tr_{q/p}(w^{-1}v^2) \\
G(\eta,\chi)G(\eta',\chi')q\eta(w)\eta'(h-\tr_{q/p}(w^{-1}v^2) & \mbox{ for }v=2^{-1}u,\ h\neq\tr_{q/p}(w^{-1}v^2) \\
\end{array}\right.\\
&=&\left\{
\begin{array}{ll}
0 & \myatop{\mbox{ for $v=2^{-1}u,\ h=\tr_{q/p}(w^{-1}v^2),$}}{\mbox{ or $v\neq 2^{-1}u$},}\\
G(\eta,\chi)G(\eta',\chi')q & \myatop{\mbox{ for $v=2^{-1}u,\ h\neq\tr_{q/p}(w^{-1}v^2),$}}{\mbox{$\eta(w)\eta'(h-\tr_{q/p}(w^{-1}v^2))=1,$}}\\
-G(\eta,\chi)G(\eta',\chi')q & \myatop{\mbox{ for $v=2^{-1}u,\ h\neq\tr_{q/p}(w^{-1}v^2),$}}{\mbox{$\eta(w)\eta'(h-\tr_{q/p}(w^{-1}v^2))=-1.$}}\\
\end{array}\right.\\
\end{eqnarray*}
\end{enumerate}

When $w=0$ and $(u,v)\neq (0,0)$, it is easy to deduce that $\Omega=0$.

From the discussions above and Lemma \ref{quadGuasssum}, we have
\begin{eqnarray*}
N(u,v,w,h)&=&\left\{
\begin{array}{ll}
p^{2m-1} & \myatop{\mbox{for $v\neq 2^{-1}u,\ w\neq 0,$}}{\mbox{ or $w=0,\ (u,v)\neq (0,0)$,}}\\
p^{m-1}\left(p^m-(p-1)p^{\frac{m}{2}}(\sqrt{-1})^{\frac{p-1}{2}m}\right) & \myatop{\mbox{ for $v=2^{-1}u,\ h=\tr_{q/p}(w^{-1}v^2),$}}{\mbox{$\eta(w)=1,$}}\\
p^{m-1}\left(p^m+p^{\frac{m}{2}}(\sqrt{-1})^{\frac{p-1}{2}m}\right) & \myatop{\mbox{ for $v=2^{-1}u,\ h\neq\tr_{q/p}(w^{-1}v^2),$}}{\mbox{$\eta(w)=1,$}}\\
p^{m-1}\left(p^m+(p-1)p^{\frac{m}{2}}(\sqrt{-1})^{\frac{p-1}{2}m}\right) & \myatop{\mbox{ for $v=2^{-1}u,\ h=\tr_{q/p}(w^{-1}v^2),$}}{\mbox{$\eta(w)=-1,$}}\\
p^{m-1}\left(p^m-p^{\frac{m}{2}}(\sqrt{-1})^{\frac{p-1}{2}m}\right) & \myatop{\mbox{ for $v=2^{-1}u,\ h\neq\tr_{q/p}(w^{-1}v^2),$}}{\mbox{$\eta(w)=-1$}}\\
\end{array}\right.
\end{eqnarray*}
for even $m$, and
\begin{eqnarray*}
N(u,v,w,h)&=&\left\{
\begin{array}{ll}
p^{2m-1}, & \myatop{\mbox{ for $v=2^{-1}u,\ h=\tr_{q/p}(w^{-1}v^2),\ w\neq 0$ or}}{\mbox{$v\neq 2^{-1}u,\ w\neq 0$ or $w=0,\ (u,v)\neq (0,0),$}}\\
p^{m-1}\left(p^m+p^{\frac{m+1}{2}}(\sqrt{-1})^\frac{(p-1)(m+1)}{2}\right) & \myatop{\mbox{ for $v=2^{-1}u,\ h\neq\tr_{q/p}(w^{-1}v^2),\ w\neq 0,$}}{\mbox{$\eta(w)\eta'(h-\tr_{q/p}(w^{-1}v^2))=1,$}}\\
p^{m-1}\left(p^m-p^{\frac{m+1}{2}}(\sqrt{-1})^\frac{(p-1)(m+1)}{2}\right) & \myatop{\mbox{ for $v=2^{-1}u,\ h\neq\tr_{q/p}(w^{-1}v^2),\ w\neq 0,$}}{\mbox{$\eta(w)\eta'(h-\tr_{q/p}(w^{-1}v^2))=-1$}}\\
\end{array}\right.\\
\end{eqnarray*}
for odd $m$. The Hamming weight of any codeword $$\bc(u,v,w,h):=\left(\left(\tr_{p^m/p}(g(x,y))+h\right)_{(x,y)\in \gf(p^m)^{2}},\tr_{p^m/p}(w)\right)\in \C_{\V}^{(p)}$$ is then given by 
\begin{eqnarray*}
\lefteqn{ \wt(\bc(u,v,w,h)) } \\
&=&\left\{
\begin{array}{ll}
0 & \mbox{ for $(u,v,w,h)=(0,0,0,0)$} \\
p^{2m} & \mbox{ for $(u,v,w)=(0,0,0),\ h\neq 0$}\\
p^{2m-1}(p-1) & \myatop{\mbox{ for $v\neq 2^{-1}u,\ w\neq 0,\ \tr_{p^m/p}(w)=0$,}}{\mbox{ or $w=0,\ (u,v)\neq (0,0)$}}\\
p^{2m-1}(p-1)+1 & \mbox{ for $v\neq 2^{-1}u,\ w\neq 0,\ \tr_{p^m/p}(w)\neq0$} \\
p^{m-1}(p-1)\left(p^m+p^{\frac{m}{2}}(\sqrt{-1})^{\frac{p-1}{2}m}\right) & \myatop{\mbox{ for $v=2^{-1}u,\ h=\tr_{q/p}(w^{-1}v^2),$}}{\mbox{$\eta(w)=1,\ \tr_{p^m/p}(w)=0$}} \\
p^{m-1}(p-1)\left(p^m+p^{\frac{m}{2}}(\sqrt{-1})^{\frac{p-1}{2}m}\right)+1 & \myatop{\mbox{ for $v=2^{-1}u,\ h=\tr_{q/p}(w^{-1}v^2),$}}{\mbox{$\eta(w)=1,\ \tr_{p^m/p}(w)\neq0$}} \\
p^{m-1}\left(p^m(p-1)-p^{\frac{m}{2}}(\sqrt{-1})^{\frac{p-1}{2}m}\right) & \myatop{\mbox{ for $v=2^{-1}u,\ h\neq\tr_{q/p}(w^{-1}v^2),$}}{\mbox{$\eta(w)=1,\ \tr_{p^m/p}(w)=0$}} \\
p^{m-1}\left(p^m(p-1)-p^{\frac{m}{2}}(\sqrt{-1})^{\frac{p-1}{2}m}\right)+1 & \myatop{\mbox{ for $v=2^{-1}u,\ h\neq\tr_{q/p}(w^{-1}v^2),$}}{\mbox{$\eta(w)=1,\ \tr_{p^m/p}(w)\neq0$}} \\
p^{m-1}(p-1)\left(p^m-p^{\frac{m}{2}}(\sqrt{-1})^{\frac{p-1}{2}m}\right) & \myatop{\mbox{ for $v=2^{-1}u,\ h=\tr_{q/p}(w^{-1}v^2),$}}{\mbox{$\eta(w)=-1,\ \tr_{p^m/p}(w)=0$}} \\
p^{m-1}(p-1)\left(p^m-p^{\frac{m}{2}}(\sqrt{-1})^{\frac{p-1}{2}m}\right)+1 & \myatop{\mbox{ for $v=2^{-1}u,\ h=\tr_{q/p}(w^{-1}v^2),$}}{\mbox{$\eta(w)=-1,\ \tr_{p^m/p}(w)\neq0$}} \\
p^{m-1}\left(p^m(p-1)+p^{\frac{m}{2}}(\sqrt{-1})^{\frac{p-1}{2}m}\right) & \myatop{\mbox{ for $v=2^{-1}u,\ h\neq\tr_{q/p}(w^{-1}v^2),$}}{\mbox{$\eta(w)=-1,\ \tr_{p^m/p}(w)=0$}} \\
p^{m-1}\left(p^m(p-1)+p^{\frac{m}{2}}(\sqrt{-1})^{\frac{p-1}{2}m}\right)+1 & \myatop{\mbox{ for $v=2^{-1}u,\ h\neq\tr_{q/p}(w^{-1}v^2),$}}{\mbox{$\eta(w)=-1,\ \tr_{p^m/p}(w)\neq0$}} \\
\end{array}\right.\\
&=&\left\{
\begin{array}{ll}
0 & \mbox{ with 1 time,}\\
p^{2m} & \mbox{ with $p-1$ times,}\\
p^{2m-1}(p-1) & \mbox{ with $(p^m-1)(p^{2m}+p)$ times,}\\
p^{2m-1}(p-1)+1 & \mbox{ with $p^{2m}(p^m-1)(p-1)$ times,}\\
p^{m-1}(p-1)\left(p^m+p^{\frac{m}{2}}(\sqrt{-1})^{\frac{p-1}{2}m}\right) & \mbox{ with $\frac{p^{2m-1}-p^m-(p-1)p^{\frac{3m-2}{2}}(\sqrt{-1})^{\frac{(p-1)m}{2}}}{2}$ times,}\\
p^{m-1}(p-1)\left(p^m+p^{\frac{m}{2}}(\sqrt{-1})^{\frac{p-1}{2}m}\right)+1 & \mbox{ with $\frac{(p-1)(p^{2m-1}+p^{\frac{3m-2}{2}}(\sqrt{-1})^{\frac{(p-1)m}{2}})}{2}$ times,}\\
p^{m-1}\left(p^m(p-1)-p^{\frac{m}{2}}(\sqrt{-1})^{\frac{p-1}{2}m}\right) & \mbox{ with $\frac{(p-1)(p^{2m-1}-p^m-(p-1)p^{\frac{3m-2}{2}}(\sqrt{-1})^{\frac{(p-1)m}{2}})}{2}$ times,}\\
p^{m-1}\left(p^m(p-1)-p^{\frac{m}{2}}(\sqrt{-1})^{\frac{p-1}{2}m}\right)+1 & \mbox{ with $\frac{(p-1)^2(p^{2m-1}+p^{\frac{3m-2}{2}}(\sqrt{-1})^{\frac{(p-1)m}{2}})}{2}$ times,}\\
p^{m-1}(p-1)\left(p^m-p^{\frac{m}{2}}(\sqrt{-1})^{\frac{p-1}{2}m}\right) & \mbox{ with $\frac{p^{2m-1}-p^m+(p-1)p^{\frac{3m-2}{2}}(\sqrt{-1})^{\frac{(p-1)m}{2}}}{2}$ times,}\\
p^{m-1}(p-1)\left(p^m-p^{\frac{m}{2}}(\sqrt{-1})^{\frac{p-1}{2}m}\right)+1 & \mbox{ with $\frac{(p-1)(p^{2m-1}-p^{\frac{3m-2}{2}}(\sqrt{-1})^{\frac{(p-1)m}{2}})}{2}$ times,}\\
p^{m-1}\left(p^m(p-1)+p^{\frac{m}{2}}(\sqrt{-1})^{\frac{p-1}{2}m}\right) & \mbox{ with $\frac{(p-1)(p^{2m-1}-p^m+(p-1)p^{\frac{3m-2}{2}}(\sqrt{-1})^{\frac{(p-1)m}{2}})}{2}$ times,}\\
p^{m-1}\left(p^m(p-1)+p^{\frac{m}{2}}(\sqrt{-1})^{\frac{p-1}{2}m}\right)+1 & \mbox{ with $\frac{(p-1)^2(p^{2m-1}-p^{\frac{3m-2}{2}}(\sqrt{-1})^{\frac{(p-1)m}{2}})}{2}$ times}
\end{array}\right.\\
\end{eqnarray*}
for even $m$, where the frequency of each weight can be easily derived with Lemma \ref{lem-equalities}, and the Hamming weight 
\begin{eqnarray*}
\lefteqn{ \wt(\bc(u,v,w,h)) } \\
&=&\left\{
\begin{array}{ll}
0 & \mbox{ for $(u,v,w,h)=(0,0,0,0)$}\\
p^{2m} & \mbox{ for $(u,v,w)=(0,0,0),\ h\neq 0$} \\
p^{2m-1}(p-1) & \myatop{\myatop{\mbox{ for $v=2^{-1}u,\ h=\tr_{q/p}(w^{-1}v^2),$}}{\mbox{$w\neq 0,\ \tr_{p^m/p}(w)=0$ or}}}{\myatop{\mbox{$v\neq 2^{-1}u,\ w\neq 0,\ \tr_{p^m/p}(w)=0$}}{\mbox{or $w=0,\ (u,v)\neq (0,0)$}}} \\
p^{2m-1}(p-1)+1 & \myatop{\myatop{\mbox{ for $v=2^{-1}u,\ h=\tr_{q/p}(w^{-1}v^2),$}}{\mbox{$\tr_{p^m/p}(w)\neq0$ or}}}{\mbox{$v\neq 2^{-1}u,\ \tr_{p^m/p}(w)\neq0$}} \\
p^{\frac{3m-1}{2}}\left(p^{\frac{m-1}{2}}(p-1)-(\sqrt{-1})^\frac{(p-1)(m+1)}{2}\right) & \myatop{\mbox{ for $v=2^{-1}u,\ h\neq\tr_{q/p}(w^{-1}v^2),\ w\neq 0,$}}{\myatop{\mbox{$\eta(w)\eta'(h-\tr_{q/p}(w^{-1}v^2))=1,$}}{\mbox{$\tr_{p^m/p}(w)=0$}}} \\
p^{\frac{3m-1}{2}}\left(p^{\frac{m-1}{2}}(p-1)-(\sqrt{-1})^\frac{(p-1)(m+1)}{2}\right)+1 & \myatop{\mbox{ for $v=2^{-1}u,\ h\neq\tr_{q/p}(w^{-1}v^2),$}}{\myatop{\mbox{$\eta(w)\eta'(h-\tr_{q/p}(w^{-1}v^2))=1,$}}{\mbox{$\tr_{p^m/p}(w)\neq0$}}} \\
p^{\frac{3m-1}{2}}\left(p^{\frac{m-1}{2}}(p-1)+(\sqrt{-1})^\frac{(p-1)(m+1)}{2}\right) & \myatop{\mbox{ for $v=2^{-1}u,\ h\neq\tr_{q/p}(w^{-1}v^2),\ w\neq 0,$}}{\myatop{\mbox{$\eta(w)\eta'(h-\tr_{q/p}(w^{-1}v^2))=-1,$}}{\mbox{$\tr_{p^m/p}(w)=0$}}} \\
p^{\frac{3m-1}{2}}\left(p^{\frac{m-1}{2}}(p-1)+(\sqrt{-1})^\frac{(p-1)(m+1)}{2}\right)+1 & \myatop{\mbox{ for $v=2^{-1}u,\ h\neq\tr_{q/p}(w^{-1}v^2),$}}{\myatop{\mbox{$\eta(w)\eta'(h-\tr_{q/p}(w^{-1}v^2))=-1,$}}{\mbox{$\tr_{p^m/p}(w)\neq 0$}}} \\
\end{array}\right.\\
&=&\left\{
\begin{array}{ll}
0 & \mbox{ with 1 time,}\\
p^{2m} & \mbox{ with $p-1$ times,}\\
p^{2m-1}(p-1) & \myatop{\mbox{ with $p^m(p^{m-1}-1)(p^{m+1}-p+1)$}}{\mbox{$+p(p^{2m}-1)$ times,}}\\
p^{2m-1}(p-1)+1 & \mbox{ with $p^{2m-1}(p-1)(p^{m+1}-p+1)$ times,}\\
p^{\frac{3m-1}{2}}\left(p^{\frac{m-1}{2}}(p-1)-(\sqrt{-1})^\frac{(p-1)(m+1)}{2}\right) & \mbox{ with $\frac{p^m(p^{m-1}-1)(p-1)}{2}$ times,}\\
p^{\frac{3m-1}{2}}\left(p^{\frac{m-1}{2}}(p-1)-(\sqrt{-1})^\frac{(p-1)(m+1)}{2}\right)+1 & \mbox{ with $\frac{p^{2m-1}(p-1)^2}{2}$ times,}\\
p^{\frac{3m-1}{2}}\left(p^{\frac{m-1}{2}}(p-1)+(\sqrt{-1})^\frac{(p-1)(m+1)}{2}\right) & \mbox{ with $\frac{p^m(p^{m-1}-1)(p-1)}{2}$ times,}\\
p^{\frac{3m-1}{2}}\left(p^{\frac{m-1}{2}}(p-1)+(\sqrt{-1})^\frac{(p-1)(m+1)}{2}\right)+1 & \mbox{ with $\frac{p^{2m-1}(p-1)^2}{2}$ times}\\
\end{array}\right.\\
\end{eqnarray*}
for odd $m$, where the frequency of each weight can be easily determined.

Note that the dimension is $3m+1$ as $A_0=1$ whether $m$ is even or odd. Then the desired conclusions follow.
\end{proof} 

\begin{example} 
Let $\cV$ be the elliptic quadric. 
\begin{enumerate}
\item Let $m=2$ and $w$ be a generator of $\gf(2^3)$ with $w^2+w+1=0$, and $a=w^3$. Then 
      $\C_{\V}^{(2)}$ has parameters $[17, 7, 6]$ and its dual has parameters $[17, 10, 4]$. 
\item Let $m=3$ and $w$ be a generator of $\gf(2^3)$ with $w^3+w+1=0$, and $a=w^3$. Then 
      $\C_{\V}^{(2)}$ has parameters $[65, 10, 28]$ and its dual has parameters $[65, 55, 4]$. 
\item Let $m=2$ and $w$ be a generator of $\gf(3^2)$ with $w^2+2w+2=0$, and $a=w^3$. Then 
      $\C_{\V}^{(3)}$ has parameters $[82, 7, 51]$ and its dual has parameters $[82, 75, 4]$. 
\end{enumerate} 
All of these codes and their duals are optimal according to the tables of best codes known 
maintained at http://www.codetables.de. 
\end{example}

At the end of this section, we explain why the subfield codes of ovoid codes are interesting. 
It is known that the set $\V$ of (\ref{eqn-ellipticquadric}) is an ovoid if and only 
if $x^2+x+a$ is irreducible over $\gf(q)$. The parameters of the subfield code $\C_{\V}^{(p)}$ 
of the code $\C_{\V}$ were determined for all $a$. In all cases, the code $\C_{\V}^{(p)}$ has 
length $p^{2m}+1$ and dimension $3m+1$. However, its minimum distance $d^{(p)}$ and weight distribution vary according to $a$ for odd $p$. Specifically, we have the following for odd 
$p$. 
\begin{itemize}
\item If $x^2+x+a$ is irreducible, then $\V$ is an ovoid and 
      $$ 
        d^{(p)}= p^{2m-1}(p-1)-p^{m-1}.  
      $$ 
      Further, the dual code $\C_{\V}^{(p)\perp}$ has minimum distance $d^{(p)\perp}=4$. 
\item If $x^2+x+a$ is reducible and $a \neq 1/4$, then $\V$ is not an ovoid and 
      $$ 
        d^{(p)}= p^{2m-1}(p-1)-p^{m-1}(p-1).  
      $$ 
      Further, the dual code $\C_{\V}^{(p)\perp}$ has minimum distance $3$ according to 
      our experimental data.  
\item If $m$ is even and $a = 1/4$, then $x^2+x+a$ is reducible, $\V$ is not an ovoid and 
      $$ 
        d^{(p)}= p^{2m-1}(p-1)-p^{m-1}(p-1)p^{m/2}.  
      $$ 
      Further, the dual code $\C_{\V}^{(p)\perp}$ has minimum distance $3$ according to 
      our experimental data.             
\item If $m$ is odd and $a = 1/4$, then $x^2+x+a$ is reducible, $\V$ is not an ovoid and 
      $$ 
        d^{(p)}= p^{2m-1}(p-1)-p^{m-1}p^{(m+1)/2}.  
      $$ 
      Further, the dual code $\C_{\V}^{(p)\perp}$ has minimum distance $3$ according to 
      our experimental data.                   
\end{itemize}
Therefore, both $\C_{\V}^{(p)}$ and $\C_{\V}^{(p)\perp}$ have the best minimum distance 
when $\V$ is an ovoid. The comparison above shows that the subfield codes of ovoid codes 
are indeed interesting.

\section{Subfield codes of the Tits ovoid codes}

Let $q=2^{2e+1}$ with $e\geq 1$. Recall that the Tits ovoids are defined by 
 $$\T=\{(0,0,1,0)\}\cup \{(x,y,x^\sigma+xy+y^{\sigma+2},1):x,y\in \Bbb \gf(q)\},$$ where $\sigma=2^{e+1}$.
Denote
$$t_1(x,y)=x,\ t_2(x,y)=y,\ t_3(x,y)=x^\sigma+xy+y^{\sigma+2}$$ and
$$G_{x,y}=\begin{bmatrix} t_1(x,y)\\ t_2(x,y)\\ t_3(x,y) \\ 1 \end{bmatrix}_{(x,y)\in \gf(q)^2} $$
which is a $4\times q^2$ matrix over $\gf(q)$. The Tits ovoid code $\C_{\T}$ over $\gf(q)$ 
has the generator matrix
$$G_{\T}=\begin{bmatrix} G_{x,y}  \begin{array}{c}
0 \\
0\\
1\\
0\end{array}
\end{bmatrix}.$$
Our task in this section is to investigate the subfield code $\C_{\T}^{(2)}$ of the 
Tits ovoid code $\C_{\T}$.

Using the definition of $G_{\T}$ and Theorem \ref{th-tracerepresentation}, we have the
following trace representation of $\C_{\T}^{(2)}$: 
\begin{eqnarray*}
\C_{\T}^{(2)}&=&\left\{\left(\left(\tr_{q/2}(ut_1(x,y)+vt_2(x,y)+wt_3(x,y))+h\right)_{(x,y)\in \gf(q)^{2}},\tr_{q/2}(w)\right):\myatop {u,v,w \in \gf(q)}{h\in \gf(2)}\right\}\\
&=&\left\{\left(\left(\tr_{q/2}(t(x,y))+h\right)_{(x,y)\in \gf(q)^{2}},\tr_{q/2}(w)\right):\myatop {u,v,w \in \gf(q)}{h\in \gf(2)}\right\}
\end{eqnarray*}
where $t(x,y):=ut_1(x,y)+vt_2(x,y)+wt_3(x,y))=ux+vy+wx^\sigma+wxy+wy^{\sigma+2}$.

\begin{theorem}\label{th-anotherovoidcode}
Let $q=2^{2e+1}$ with $e\geq 1$. Then $\C_{\T}^{(2)}$ is a linear code with parameters $[2^{4e+2}+1,6e+4]$ and the weight distribution in Table \ref{tab-6}. Its dual $\C_{\T}^{(2)\perp}$ has parameters $[2^{4e+2}+1,2^{4e+2}-6e-3,4]$.
\end{theorem}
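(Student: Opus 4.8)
The plan is to follow the proof of Theorem~\ref{th-p=2} in structure, since the claimed length $2^{4e+2}+1=q^2+1$ and dimension $6e+4=3(2e+1)+1$ fit the pattern $[q^2+1,3m+1]$ with $q=2^{2e+1}$ and $m=2e+1$. For $(u,v,w)\neq(0,0,0)$ I set
$$N_0(u,v,w)=\sharp\{(x,y)\in\gf(q)^2:\tr_{q/2}(t(x,y))=0\},$$
and the orthogonality relation gives $2N_0(u,v,w)=q^2+\Delta$ with $\Delta=\sum_{(x,y)\in\gf(q)^2}(-1)^{\tr_{q/2}(t(x,y))}$. When $w=0$ the exponent $t(x,y)=ux+vy$ is linear, so $\Delta=0$ and $N_0=2^{4e+1}$ for every $(u,v)\neq(0,0)$. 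All the content lies in the case $w\neq0$.

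For $w\neq0$ I would evaluate $\Delta$ by summing over $x$ first. The crucial identity is that, because $\sigma=2^{e+1}$ and $m=2e+1$, Frobenius-invariance of the trace gives $\tr_{q/2}(wx^\sigma)=\tr_{q/2}(w^{2^{e}}x)$, which linearises the $wx^\sigma$ term. Thus the inner sum $\sum_{x}(-1)^{\tr_{q/2}((u+wy+w^{2^{e}})x)}$ equals $q$ exactly when $y=y_0:=w^{-1}(u+w^{2^{e}})$ and vanishes otherwise, leaving
$$\Delta=q\,(-1)^{\tr_{q/2}(vy_0+wy_0^{\sigma+2})}.$$
In particular $\Delta=\pm q$ and $N_0(u,v,w)=2^{4e+1}\pm 2^{2e}$, exactly the dichotomy of the elliptic quadric. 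This is the step I expect to be the main obstacle, because it is the only place where the Tits-specific exponents $x^\sigma$ and $y^{\sigma+2}$ enter, and one must verify that they collapse cleanly; the alternative route of computing $\Delta^2$ via the shift $(x,y)\mapsto(x+A,y+B)$ also works, since the bilinear term $wxy$ contributes a difference that is linear in $x$ through $wBx$ and linear in $y$ through $wAy$, forcing $A=B=0$ and hence $\Delta^2=q^2$.

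Next I would count multiplicities. Fixing $w\neq0$, the map $u\mapsto y_0$ is a bijection of $\gf(q)$; for $y_0\neq0$ the character sum $\tr_{q/2}(vy_0)$ balances as $v$ runs over $\gf(q)$, while $y_0=0$ (a single $u$) forces $\Delta=+q$ for all $v$. Hence among the $q^2$ pairs $(u,v)$ there are $q(q+1)/2$ with $\Delta=+q$ and $q(q-1)/2$ with $\Delta=-q$, independently of $w$. Combining this with the two values of $h\in\gf(2)$ and the two cases $\tr_{q/2}(w)=0$ (which occurs for $2^{2e}-1$ nonzero $w$) and $\tr_{q/2}(w)\neq0$ (which occurs for $2^{2e}$ values of $w$ and supplies the trailing coordinate), the weight $\wt(\bc(u,v,w,h))$ of a codeword takes one of the six nonzero values built from $N_0$, $N_1=q^2-N_0$, and a possible $+1$; tallying these reproduces Table~\ref{tab-6} and, since $A_0=1$, gives dimension $6e+4$.

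Finally, for the dual I would repeat the argument of Theorem~\ref{th-p=2}: a generator matrix of $\C_{\T}^{(2)}$ is a parity-check matrix of $\C_{\T}^{(2)\perp}$, so Theorems~\ref{th-dualdistance} and~\ref{th-dualdistanceofC} yield $d^{(2)\perp}\geq4$, while the sphere-packing bound applied to the parameters $[2^{4e+2}+1,\,2^{4e+2}-6e-3]$ forces $d^{(2)\perp}\leq4$. Hence $d^{(2)\perp}=4$ and the stated dual parameters follow.
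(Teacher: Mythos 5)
Your proposal is correct, and its skeleton (the orthogonality relation, the split $w=0$ versus $w\neq 0$, the weight tally over $(u,v,w,h)$, and the dual parameters via Theorems \ref{th-dualdistance} and \ref{th-dualdistanceofC} plus sphere-packing) coincides with the paper's proof. The genuine difference is in the central step, the evaluation of $\Delta=\sum_{(x,y)}(-1)^{\tr_{q/2}(t(x,y))}$ for $w\neq 0$. The paper computes $\Delta^2$ by the shift substitution $(x_1,y_1)=(x+A,y+B)$: the $x$-sum kills all terms with $B\neq 0$, then the $y$-sum kills those with $A\neq 0$, giving $\Delta^2=q^2$ and hence only $|\Delta|=q$; the sign of $\Delta$ is never determined. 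You instead linearize directly: since $\sigma\cdot 2^e=2^{2e+1}=q$, Frobenius-invariance of the trace gives $\tr_{q/2}(wx^\sigma)=\tr_{q/2}(w^{2^e}x)$, so summing over $x$ first isolates the unique $y_0=w^{-1}(u+w^{2^e})$ and yields the exact value $\Delta=q\,(-1)^{\tr_{q/2}(vy_0+wy_0^{\sigma+2})}$, sign included, together with the count of how often each sign occurs ($q(q+1)/2$ versus $q(q-1)/2$ pairs $(u,v)$ for each fixed $w$). This is strictly more information than the paper extracts; the paper does not need it because, for each fixed $(u,v,w)$ with $w\neq 0$, the two choices of $h$ produce one codeword of weight $N_0\,(+1)$ and one of weight $N_1\,(+1)$, and $\{N_0,N_1\}=\{2^{4e+1}+2^{2e},\,2^{4e+1}-2^{2e}\}$ as a set already fixes all multiplicities in Table \ref{tab-6} without knowing the sign of $\Delta$ — your finer count reproduces the same table, as you verify. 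The trade-off: the squaring argument is more generic (it does not require the exponent to collapse to a linear form in $x$), while your linearization exploits the specific arithmetic $\sigma=2^{e+1}$, $q=2^{2e+1}$ of the Tits ovoid and gives a shorter, sharper computation. One small caution on your parenthetical sketch of the squaring route: the shifted exponent also contains the terms $wB^2y^{\sigma}+wB^{\sigma}y^2$, so one must sum over $x$ first to force $B=0$ before the $y$-sum reduces to $\sum_y\chi(wAy)$; your primary route avoids this subtlety entirely. The dual-distance argument is identical to the paper's.
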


\begin{table}[ht]
\begin{center}
\caption{The weight distribution of $\C_{\T}^{(2)}$}\label{tab-6}
\begin{tabular}{cc} \hline
Weight  &  Multiplicity   \\ \hline
0    &   1\\
$2^{4e+2}$    &   1\\
$2^{4e+1}$ & $2(2^{4e+2}-1)$\\
$2^{4e+1}+2^{2e}$ & $2^{4e+2}(2^{2e}-1)$\\
$2^{4e+1}-2^{2e}$ & $2^{4e+2}(2^{2e}-1)$\\
$2^{4e+1}+2^{2e}+1$ & $2^{6e+2}$\\
$2^{4e+1}-2^{2e}+1$ & $2^{6e+2}$\\
\hline
\end{tabular}
\end{center}
\end{table}

\begin{proof}
Let $\chi$ be the canonical additive character of $\gf(q)$.
Firstly, assume that $(u,v,w)\neq (0,0,0)$.
Denote
$$N_0(u,v,w)=\sharp\{(x,y)\in \gf(q)^2:\tr_{q/2}(t(x,y))=0\}$$ and $$N_1(u,v,w)=\sharp\{(x,y)\in \gf(q)^2:\tr_{q/2}(t(x,y))=1\}.$$ By the orthogonality relation of additive characters, we have
\begin{eqnarray}\label{eqn-11}
\nonumber 2N_0(u,v,w)&=&\sum_{(x,y)\in \gf(q)^2}\sum_{z\in \gf(2)}(-1)^{z\tr_{q/2}(t(x,y))}\\
&=&q^2+\sum_{(x,y)\in \gf(q)^2}(-1)^{\tr_{q/2}(t(x,y))}.
\end{eqnarray}
We discuss the value of $N_0(u,v,w)$ in the following cases.
\begin{enumerate}
\item If $w=0$, we have $t(x,y)=ux+vy$. Since $(u,v)\neq (0,0)$, we deduce that $N_0(u,v,w)=q^2/2=2^{4e+1}$.
\item If $w\neq 0$, we denote
$$\Delta=\sum_{(x,y)\in \gf(q)^2}(-1)^{\tr_{q/2}(t(x,y))}.$$ Then
\begin{eqnarray*}
\Delta^2&=&\left(\sum_{(x,y)\in \gf(q)^2}(-1)^{\tr_{q/2}(-t(x,y))}\right)\left(\sum_{(x_1,y_1)\in \gf(q)^2}(-1)^{\tr_{q/2}(t(x_1,y_1))}\right)\\
&=&\sum_{(x,y)\in \gf(q)^2}\sum_{(x_1,y_1)\in \gf(q)^2}(-1)^{\tr_{q/2}(t(x_1,y_1)-t(x,y))}\\
&=&\sum_{(x,y)\in \gf(q)^2}\sum_{(A,B)\in \gf(q)^2}(-1)^{\tr_{q/2}(t(x+A,y+B)-t(x,y))}\\
&=&q^2-\sum_{(A,B)\in \gf(q)^2\setminus \{(0,0)\}}\sum_{(x,y)\in \gf(q)^2}(-1)^{\tr_{q/2}(t(x+A,y+B)-t(x,y))}\\
&=&q^2-\sum_{(A,B)\in \gf(q)^2\setminus \{(0,0)\}}\chi(uA+vB+wAB+wA^{2^{e+1}}+wB^{2^{e+1}+2})\\
& &\times \sum_{y\in \gf(q)}\chi(wB^2y^{2^{e+1}}+wB^{2^{e+1}}y^2+wAy)\sum_{x\in \gf(q)}\chi(wBx)\\
&=&q^2-q\sum_{A\in \gf(q)^*}\chi(uA+wA^{2^{e+1}})\sum_{y\in \gf(q)}\chi(wAy)\\
&=&q^2,
\end{eqnarray*}
where we used the variable substitution $x_1=x+A,y_1=y+B$ in the third equality and the last two equalities hold due to the orthogonality relation of additive characters. Hence $\Delta=\pm q$ and Equation (\ref{eqn-11}) implies
$$N_0(u,v,w)=2^{4e+1}\pm 2^{2e}.$$
\end{enumerate}
Combining the two cases above yields
\begin{eqnarray*}
N_0(u,v,w)=\left\{
\begin{array}{ll}
2^{4e+1}    &   \mbox{ for }w=0,\\
2^{4e+1}\pm 2^{2e}    &   \mbox{ for }w\neq 0,\\
\end{array} \right.
\end{eqnarray*}
where $(u,v,w)\neq (0,0,0)$ and $N_1(u,v,w)=2^{4e+2}-N_0(u,v,w)$.

For any codeword $\bc(u,v,w,h):=\left(\left(\tr_{q/2}(t(x,y))+h\right)_{(x,y)\in \gf(q)^{2}},\tr_{2^m/2}(w)\right)\in \C_{\T}^{(2)}$, by the discussions above we deduce that
\begin{eqnarray*}
\wt(\bc(u,v,w,h))&=&\left\{
\begin{array}{ll}
0    &   \mbox{ for }(u,v,w,h)=(0,0,0,0) \\
2^{4e+2}    &   \mbox{ for }(u,v,w,h)=(0,0,0,1) \\
N_1(u,v,w) & \mbox{ for }w=h=0,\ (u,v)\neq (0,0) \\
N_0(u,v,w) & \mbox{ for }w=0,\ h=1,\ (u,v)\neq (0,0) \\
N_1(u,v,w) & \mbox{ for }h=0,\ w\neq 0, \tr_{q/2}(w)=0,\ (u,v)\in\gf(q)^2 \\
N_0(u,v,w) & \mbox{ for }h=1,\ w\neq 0, \tr_{q/2}(w)=0,\ (u,v)\in \gf(q)^2 \\
N_1(u,v,w)+1 & \mbox{ for }h=0,\ \tr_{q/2}(w)\neq0,\ (u,v)\in \gf(q)^2 \\
N_0(u,v,w)+1 & \mbox{ for }h=1,\ \tr_{q/2}(w)\neq0,\ (u,v)\in \gf(q)^2 \\
\end{array} \right.\\
&=&\left\{\begin{array}{ll}
0    &   \mbox{ with 1 time},\\
2^{4e+2}    &   \mbox{ with 1 time},\\
2^{4e+1} & \mbox{ with }2(2^{4e+2}-1)\ \mbox{times},\\
2^{4e+1}+2^{2e} & \mbox{ with }2^{4e+2}(2^{2e}-1)\ \mbox{times},\\
2^{4e+1}-2^{2e} & \mbox{ with }2^{4e+2}(2^{2e}-1)\ \mbox{times},\\
2^{4e+1}+2^{2e}+1 & \mbox{ with }2^{6e+2}\ \mbox{times},\\
2^{4e+1}-2^{2e}+1 & \mbox{ with }2^{6e+2}\ \mbox{times},
\end{array} \right.
\end{eqnarray*}
where the frequency of each weight is easy to derive. The dimension is $3m+1$ as $A_0=1$.

The parameters of its dual follow from Theorem \ref{th-dualdistanceofC} and the sphere-packing bound. 
\end{proof}

\section{Concluding remarks} 

Example 1 demonstrates that the subfield code $\C_{\cO}^{(p)}$ of some ovoid code 
$\C_{\cO}$ is optimal. When $\cO$ is an elliptic quadric or the Tits ovoid, the dual 
code $\C_{\cO}^{(p)\perp}$ is distance-optimal according to the sphere-packing bound.  

Let $q=2^m$. Note that every ovoid code $\C$ over $\gf(q)$ must have parameters 
$[q^2+1, 4, q^2-q]$ and the 
weight enumerator 
\begin{eqnarray*} 
1+(q^2-q)(q^2+1)z^{q^2-q}+(q-1)(q^2+1)z^{q^2}. 
\end{eqnarray*} 
However, the subfield code $\C^{(2)}$ may have different parameters and weight distributions. 
In the case of the elliptic quadric and Tits ovoid, the subfield code $\C^{(2)}$ has the same 
parameters and weight distribution (only six nonzero weights). However, the subfield code 
$\C^{(2)}$ of another family of ovoid codes documented in \cite{Ding18arxiv} has $2^m$ 
nonzero weights and very different parameters according to our Magma experimental data. 
It seems very difficult to settle the parameters of the subfield codes of the ovoid codes 
presented in \cite{Ding18arxiv}. 

Finally, we point out that $m$-ovoids are related to ovoids and give two-weight codes \cite{FMX}. It would be interesting to study the subfield codes of the two-weight codes from $m$-ovoids.

\end{document}